\newcommand{\code}[1]{\lstinline!#1!}
\definecolor{egreen}{rgb}{0, 0.4, 0.267}
\definecolor{dkviolet}{rgb}{0.6,0,0.8}
\definecolor{dkgreen}{rgb}{0,0.4,0}
\definecolor{dkblue}{rgb}{0,0.1,0.5}
\definecolor{lightblue}{rgb}{0,0.5,0.5}
\definecolor{orange}{rgb}{0.9,0.39,0}
\newcommand{\namefont}[1]{\textsf{#1}}
\newcommand{\Coq}{\namefont{Coq}\xspace}
\newcommand{\Lean}{\namefont{Lean}\xspace}
\newcommand{\Globular}{\namefont{Globular}\xspace}
\newcommand{\ie}{i.e.}
\newcommand{\eg}{e.g.}
\newcommand{\resp}{resp.\ }
\renewcommand{\~}{\widetilde}
\renewcommand{\emptyset}{\varnothing}
\newcommand{\N}{\mathbb N}
\newcommand{\Cat}{\mathcal C}
\newcommand{\T}{\mathcal T}
\newcommand{\M}{\mathcal M}
\newcommand{\A}{\mathcal A}
\newcommand{\Q}{\mathcal Q}
\newcommand{\BP}{\mathcal{BP}}
\newcommand{\x}{\mathrm{x}}
\newcommand{\PQ}{\mathrm{PQ}}
\newcommand{\im}{\mathrm{Im}}
\newcommand{\card}{\mathrm{card}}
\newcommand{\Ob}{\mathrm{Ob}}
\newcommand{\Hom}{\mathrm{Hom}}
\newcommand{\restr}{\mathrm{restr}}
\newcommand{\com}{\mathrm{commute}}
\newcommand{\id}{\mathrm{id}}
\newcommand{\cone}{\mathrm{cone}}
\newcommand{\comp}{\mathrm{comp}}
\newcommand{\cat}{\mathrm{cat}}
\newcommand{\ab}{\mathrm{ab}}
\newcommand{\sop}{\textsf{sp}}
\newcommand{\tp}{\textsf{tp}}
\newcommand{\sot}{\textsf{st}}
\newcommand{\tot}{\mathrm{tot}}
\newcommand{\free}[1]{{\langle #1 \rangle}}
\newcommand{\rel}[1]{\sim}
\newcommand{\eqd}{\approx}
\newcommand{\cyc}{\mathring}
\newcommand{\dual}{\dagger}
\newcommand{\RestrComp}{\mathrm{RestrComp}}
\newcommand{\Pushout}{\mathrm{Cospan}}
\newcommand{\PushoutE}{\mathrm{PushoutEU}}
\newcommand{\Comp}{\mathrm{Comp}}
\newcommand{\CompE}{\mathrm{CompE}}
\newcommand{\ComEq}{\mathrm{ComEq}}
\newcommand{\EqPath}{\mathrm{EqPath}}
\newcommand{\EqPathRefl}{\mathrm{EqPathRefl}}
\newcommand{\EqPathSym}{\mathrm{EqPathSym}}
\newcommand{\EqPathTrans}{\mathrm{EqPathTrans}}
\newcommand{\PathCom}{\mathrm{PathCom}}
\newcommand{\EqPathConcat}{\mathrm{EqPathConcat}}
\newcommand{\Id}{\mathrm{Id}}
\newcommand{\IdE}{\mathrm{IdE}}
\newcommand{\EmptyEU}{\mathrm{EmptyEU}}
\newcommand{\CommergeWithId}{\mathrm{CommergeWithId}}
\newcommand{\Mono}{\mathrm{Mono}}
\newcommand{\Epi}{\mathrm{Epi}}
\newcommand{\Cone}{\mathrm{Cone}}
\newcommand{\Limit}{\mathrm{Limit}}
\newcommand{\Colimit}{\mathrm{Colimit}}
\newcommand{\Zero}{\mathrm{Zero}}
\newcommand{\ZeroE}{\mathrm{ZeroE}}
\newcommand{\Ker}{\mathrm{Ker}}
\newcommand{\Coker}{\mathrm{Coker}}
\newcommand{\ProductE}{\mathrm{ProductE}}
\newcommand{\CoproductE}{\mathrm{CoproductE}}
\newcommand{\KerE}{\mathrm{KerE}}
\newcommand{\CokerE}{\mathrm{CokerE}}
\newcommand{\MonoNormal}{\mathrm{MonoNormal}}
\newcommand{\EpiNormal}{\mathrm{EpiNormal}}
\newcommand{\Commerge}{\mathrm{Commerge}}
\NewDocumentCommand{\quiver}{ O{1} m O{0} m O{0} }{
  \tikz[baseline={([yshift=\ifblank{#1}{-.6ex}{0ex}]current bounding box.center)}, inner sep=\ifblank{#1}{1pt}{.5pt}, -{Latex[scale=\ifblank{#1}{1}{.5}]}, scale=\ifblank{#1}{1}{.5}]{
    \bfseries
    \foreach \p [count = \j, evaluate={\i=int(\j-1)}, evaluate={\k=int(100-(\i<#3)*70)}] in {#2} {
      \node[black!\k] (\i) at \p {.};
    }
    \foreach \i/\j [count = \l, evaluate={\k=int(100-(\l<=#5)*70)}] in {#4} {
      \draw[black!\k] (\i)--(\j);
    }
  }
}
\NewDocumentCommand{\quiverr}{ O{1} m O{0} m O{0} }{
  \tikz[baseline={([yshift=\ifblank{#1}{-.5ex}{0ex}]current bounding box.center)}, inner sep=\ifblank{#1}{1pt}{.5pt}, -{Latex[scale=\ifblank{#1}{1}{.5}]}, scale=\ifblank{#1}{1}{.5}]{
    \bfseries
    \foreach \p [count = \j, evaluate={\i=int(\j-1)}, evaluate={\k=int(100-(\i<#3)*70)}] in {#2} {
      \node[black!\k] (\i) at \p {.};
    }
    \foreach \i/\j/\s [count = \l, evaluate={\k=int(100-(\l<=#5)*70)}] in {#4} {
      \draw[black!\k] (\i)to[bend left=\s](\j);
    }
  }
}
\newcommand{\Qdot}[1][1]{{\quiver[#1]{(0,0)}{}}}\newcommand{\Qtwodots}[1][1]{{\quiver[#1]{(0,0),(.2,0)}{}}}
\newcommand{\Qloop}{\tikz[baseline={(current bounding box.center)}, inner sep=.5pt, scale=.5]{\node (0) at (0,0) {\textbf{.}}; \draw[-{Latex[scale=.5]}] (0) to[out=120, in=60, looseness=10] (0);}}
\newcommand{\Qmap}[1][1]{{\quiver[#1]{(0,0),(.8,0)}{0/1}}}
\newcommand{\Qbimap}[1][1]{{\quiver[#1]{(0,0),(1,0),(2,0)}{0/1,1/2}}}
\newcommand{\Qcobimap}[1][1]{{\quiverr[#1]{(0,0),(1,0)}{0/1/30,0/1/-30}}}
\newcommand{\Qcomp}[1][1]{{\quiver[#1]{(0,0),(.5,-.5),(1,0)}{0/1,0/2,1/2}}}
\newcommand{\Qmono}[1][1]{{\quiverr[#1]{(0,0),(.8,0),(1.6,0)}{0/1/20,0/1/-20,0/2/35,1/2/0}}}
\newcommand{\Qker}[1][1]{{\quiver[#1]{(0,0),(.7,.2),(.9,-.2),(1.6,0)}{0/1,0/2,0/3,1/3,2/3}}}
\newcommand{\mapicomp}{{\quiver{(1,0),(.5,-.5),(0,0)}[1]{1/0,2/0,2/1}[2]}}
\newcommand{\mapiocomp}{{\quiver{(.5,-.5),(0,-0),(1,0)}[1]{1/0,0/2,1/2}[2]}}
\newcommand{\mapiicomp}{{\quiver{(0,0),(.5,-.5),(1,0)}[1]{0/1,0/2,1/2}[2]}}
\newcommand{\dotimap}{{\quiver{(1,0),(0,0)}[1]{1/0}[1]}}
\newcommand{\dotiomap}{{\quiver{(0,0),(1,0)}[1]{0/1}[1]}}
\newcommand{\twodotsimap}[1][1]{{\quiver[#1]{(0,0),(1,0)}{0/1}[1]}}
\newcommand{\mapioomono}  {{\quiverr{(0,0),(.8,0),(1.6,0)}[1]{0/1/20,0/1/-20,0/2/35,1/2/0}[3]}}
\newcommand{\compimono}   {{\quiverr{(0,0),(.8,0),(1.6,0)}{0/1/-20,0/1/20,0/2/35,1/2/0}[1]}}
\newcommand{\compiomono}  {{\quiverr{(0,0),(.8,0),(1.6,0)}{0/1/20,0/1/-20,0/2/35,1/2/0}[1]}}
\newcommand{\cobimapimono}{{\quiverr{(1.6,0),(.8,0),(0,0)}[1]{2/0/35,1/0/0,2/1/20,2/1/-20}[2]}}
\newcommand{\dotiiker}{{\quiver{(0,0),(.7,.2),(1.6,0),(.9,-.2)}[3]{0/1,0/3,0/2,1/2,3/2}[5]}}
\newcommand{\mapiooker}{{\quiver{(0,0),(.9,-.2),(.7,.2),(1.6,0)}[2]{0/2,0/1,0/3,1/3,2/3}[4]}}
\newcommand{\mapiker}{{\quiver{(.9,-.2),(1.6,0),(0,0),(.7,.2)}[2]{2/0,2/1,3/1,0/1,2/3}[4]}}
\newcommand{\veeiker}{{\quiver{(0,0),(.7,.2),(.9,-.2),(1.6,0)}[1]{0/1,0/2,0/3,1/3,2/3}[3]}}
\newcommand{\mapibimap}{\quiver{(2,0),(1,0),(0,0)}[1]{1/0,2/1}[1]}
\newcommand{\mapiobimap}{\quiver{(0,0),(1,0),(2,0)}[1]{0/1,1/2}[1]}
\newcommand{\linebox}{\path[use as bounding box] (-.05,0ex)--(.55,0ex)}
\newcommand{\spo} {\tikz[scale=.5, thick]{\linebox; \draw[black!30] (0,0) node[black] {\textbf{.}} to[out=40,in=-140] (.5,0);}}
\newcommand{\tpo} {\tikz[scale=.5, thick]{\linebox; \draw[black!30] (0,0) to[out=40,in=-140] (.5,0) node[black] {\textbf{.}};}}
\newcommand{\spd} {\tikz[scale=.5, thick]{\linebox; \draw[black!30] (0,0) to[out=40,in=-140] (.5,0); \clip (0,0) circle (.23);  \draw (0,0) to[out=40,in=-140] (.5,0);}}
\newcommand{\tpd} {\tikz[scale=.5, thick]{\linebox; \draw[black!30] (0,0) to[out=40,in=-140] (.5,0); \clip (.5,0) circle (.23); \draw (0,0) to[out=40,in=-140] (.5,0);}}
\newcommand{\tpdi}{\tikz[scale=.5, thick]{\linebox; \draw[black!30] (0,0) to[out=40,in=-140] (.5,0); \clip (.5,0) circle (.3);  \draw (0,0) to[out=40,in=-140] (.5,0);}}
\newcommand{\std} {\tikz[scale=.5, thick]{\linebox; \draw[black!30] (0,0) node[black] {\textbf{.}} to[out=40, in=-140] (.5,0) node[black] {\textbf{.}};}}
\title{A First Order Theory of Diagram Chasing}
\author{Assia Mahboubi}{Nantes Université, École Centrale Nantes, CNRS, INRIA, LS2N, UMR 6004, France}{}{0002-0312-5461}{}
\author{Matthieu Piquerez}{Nantes Université, École Centrale Nantes, CNRS, INRIA, LS2N, UMR 6004, France}{}{}{}
\authorrunning{A. Mahboubi and M. Piquerez}
\keywords{Diagram chasing, formal proofs, abelian categories, decidability}
\begin{document}

\maketitle

\begin{abstract}
This paper discusses the formalization of proofs ``by diagram chasing'', a standard technique for proving properties in abelian categories. We discuss how the essence of diagram chases can be captured by a simple many-sorted first-order theory, and we study the models and decidability of this theory. The longer-term motivation of this work is the design of a computer-aided instrument for writing reliable proofs in homological algebra, based on interactive theorem provers.

\end{abstract}

\section{Introduction}\label{sec:intro}

Homological algebra~\cite{zbMATH00758278} attaches and studies a sequence of algebraic objects, typically groups or modules, to a certain space, \eg, a ring or a topological space, in order to better understand the latter. In this field, diagram chasing is a major proof technique, which is usually carried out via a form of diagrammatic reasoning on abelian categories. A diagram can be seen as a functor $F\colon J \rightarrow \mathcal{C}$,  whose domain $J$, the indexing category, is a small category~\cite{riehl}. Diagrams are usually represented as directed multi-graphs, also called  \emph{quivers}, whose vertices are decorated with objects of $\Cat$, and arrows with morphisms. Paths in such graphs thus correspond to chains of composable arrows. Diagrams allow for visualizing the existence of certain morphisms, and to study identities between certain compositions of morphisms. In particular, a diagram \emph{commutes} when any two paths with same source and target lead to identical composite. For instance, the commutativity of the following diagram:
\begin{center}
  \begin{tikzpicture}[inner sep=1pt, -latex, scale=0.7]
  \node (0) at (0,0) {\textbf{.}};
  \node (1) at (1,1) {\textbf{.}};
  \node (2) at (2,0) {\textbf{.}};
  \scriptsize
  \draw (0) -- (1) node[midway, above left] {b};
  \draw (1) -- (2) node[midway, above right] {c};
  \draw (0) -- (2) node[midway, above=1pt] {a};
  \end{tikzpicture}
\end{center}
asserts that morphism $a$ is equal to the composition of morphisms $c$ and $b$, denoted $b \circ c$. Commutativity of diagrams in certain categories can be used to state more involved properties, and diagram chasing essentially consists in establishing the existence, injectivity, surjectivity of certain morphisms, or the exactness of some sequences, using hypotheses of the same nature. The \emph{five lemma} or the \emph{snake lemma} are typical examples of proofs ``by diagram chasing'', also called diagram chases. On paper, diagrams help conveying in a convincing manner proofs otherwise consisting of overly pedestrian chains of equations. The tension between readability and elusiveness may however become a challenge. For instance, diagram chases may rely on non-trivial duality arguments, that is, on the fact that a property about diagrams in any abelian category remains true after reversing all the involved arrows, although the replay of a given proof \emph{mutatis mutandis} cannot be fulfilled in general.

Motivated in part by the second author's experience in writing intricate diagram chases (see for instance~\cite[p.337]{piquerez:tel-03499730}), this work aims at laying the foundations of a computer-aided instrument for writing reliable proofs in homological algebra, based on interactive theorem provers. The present article discusses the design of a formal language for statements of properties amenable to proofs by diagram chasing, according to three objectives. The first is \emph{simplicity and expressivity}: this language should be at the same time simple enough to be implemented in a formal library, and expressive enough to encompass the desired corpus of results. Then, \emph{duality} arguments in proofs shall follow directly from a meta-property of the language. Finally, the corresponding proof system should allow for \emph{effective} proofs of commutativity clauses, that is, proving that the commutativity of some diagram follows from the commutativity of some other diagrams, so that these proofs can eventually be automated.

\begin{definition} \label{def:sig}
We define the many-sorted signatures $\cyc\Sigma$, resp. $\Sigma$, as follows: the sorts of signature $\cyc\Sigma$, resp. $\Sigma$, are finite, \resp acyclic finite, quivers. The symbols of $\cyc\Sigma$, \resp $\Sigma$, consists of one function symbol $\restr_{m: Q' \to Q}$, of arity $Q \to Q'$, per each quiver morphism, \resp each embedding between acyclic quivers, $m$ and one predicate $\com_Q$ on sort $Q$ for each finite, \resp acyclic finite, quiver $Q$.
\end{definition}

The thesis of the present article is that signature $\Sigma$ fulfills the three above objectives. We validate this thesis by giving a first-order theory for diagrams over small and abelian small categories respectively. We state and prove a duality theorem and motivate the choice of $\Sigma$ over the possibly more intuitive $\cyc\Sigma$ by the effectiveness objective. The rest of the article is organized as follows. We first fix some vocabulary and notations in \cref{sec:prelim}, so as in particular to make~\cref{def:sig} precise. Then, \cref{sec:small} introduces a theory for small categories and describes its models, \cref{sec:dual} discusses duality, before \cref{sec:abel} provides an analogue study for abelian categories. Last, we prove in \cref{sec:dec} the decidability, resp. undecidability, of commutativity clauses in $\Sigma$, resp. $\cyc\Sigma$, before concluding in \cref{sec:concl}.

\section{Preliminaries}\label{sec:prelim}

In all what follows, $\N := \{0, 1, \dots \}$ refers to the set of non-negative integers. If if $k\in\N$, then $[k]$ denotes the finite collection $\{0,\dots,k-1\}$. We denote $\card A$ the cardinal of a finite set $A$. We use the notation $\id$ for the identity map.

\subsection{Quivers}\label{ssec:quiv}

\begin{definition}[General quiver, dual]\label{def:quiv}
A \emph{general quiver} $\Q$ is a quadruple $(V_\Q, A_\Q, s_\Q\colon A_\Q \to V_\Q, t_\Q\colon A_\Q \to V_\Q)$ where $V_\Q$ and $A_\Q$ are two sets. The element of $V_\Q$ are called the \emph{vertices} of $\Q$ and the element of $A_\Q$ are called \emph{arrows}. If $a \in A_\Q$, $s_\Q(a)$ is called the \emph{source} of $a$ and $t_\Q(a)$ is called its \emph{target}.
The \emph{dual} of a quiver $\Q$ is the quiver $\Q^\dual \coloneqq (V_\Q,A_\Q,t_\Q,s_\Q)$, which swaps the source and the target maps of $\Q$.
\end{definition}

\begin{definition}[Morphism, embedding, restriction]
A \emph{morphism of quiver} $m\colon \Q \to \Q'$, is the data of two maps $m_V\colon V_\Q \to V_{\Q'}$ and $m_A\colon A_\Q \to A_{\Q'}$ such that $m_A \circ s_\Q = s_{\Q'} \circ m_A$ and $m_A \circ t_\Q = t_{\Q'} \circ m_A$. Such a morphism is called an \emph{embedding of quivers} if moreover both $m_V$ and $m_A$ are injective. In this case we write $m\colon \Q \hookrightarrow \Q'$.

If $A$ is a subset of $A_\Q$, the \emph{(spanning) restriction of\/ $\Q$ to $A$} denoted $\Q|_A$ is the quiver $(V_\Q, A, s_\Q|_A, t_\Q|_A)$. There is a canonical embedding $\Q|_A \hookrightarrow \Q$.
\end{definition}

\medskip

We denote by $\emptyset$ the empty quiver with no vertex and no
arrow, and by $\cyc S$ the set of quivers $Q$ such that $V_Q$ and $A_Q$ are finite subsets of $\N$. In this article,  a \emph{quiver} refers to an element of $\cyc S$. We use a non-cursive $Q$ for elements of $\cyc S$, and a cursive $\Q$ for general quivers.

For the sake of readability, we use drawings to describe
some elements of $\cyc S$, as for instance:
\begin{center}
\begin{tikzpicture}[inner sep=1pt, -latex,scale=.7]
  \node (0) at (0,0) {\textbf{.}};
  \node (1) at (1,0) {\textbf{.}};
  \node (2) at (2,0) {\textbf{.}};
  \scriptsize
  \draw (0) -- (1) node[midway, above] {};
  \draw (0) to[bend right=35] node[midway, above] {} (2);
  \draw (1) to[bend left=30] node[midway, above] {} (2);
  \draw (1) to[bend right=20] node[midway, above] {} (2);
\end{tikzpicture}
\end{center}
For a quiver $Q$ denoted by such a drawing, the convention is that
 $V_Q = [\card{V_Q}]$ and $A_Q = [\card{A_Q}]$ .
From left to right, the drawn vertices
correspond to $0, 1, \dots, \card{V_Q}-1$. Arrows are  then numbered by sorting pairs $(s_Q, t_Q)$ in
increasing lexicographical order, as in:
\begin{center}
\begin{tikzpicture}[inner sep=1pt, -latex,scale=.7]
  \node (0) at (0,0) {\textbf{.}};
  \node (1) at (1,0) {\textbf{.}};
  \node (2) at (2,0) {\textbf{.}};
  \scriptsize
  \draw (0) node[above] {0};
  \draw (1) node[above] {1};
  \draw (2) node[above] {2};
  \draw (0) -- (1) node[midway, above] {0};
  \draw (0) to[bend right=35] node[midway, above] {1} (2);
  \draw (1) to[bend left=30] node[midway, above] {2} (2);
  \draw (1) to[bend right=20] node[midway, above] {3} (2);
  \end{tikzpicture}
\end{center}

We also use drawings to denote embeddings. The black part represents the domain of the morphism, the union of black and gray parts represents its codomain. Here is an example of an embedding of the quiver $\Qcobimap[]$ into the quiver drawn above.
\begin{center}
  \begin{tikzpicture}[inner sep=1pt, -latex,scale=.7]
    \node[black!30] (0) at (0,0) {\textbf{.}};
    \node (1) at (1,0) {\textbf{.}};
    \node (2) at (2,0) {\textbf{.}};
    \scriptsize
    \draw[black!30] (0);
    \draw (1);
    \draw (2);
    \draw[black!30] (0) -- (1);
    \draw[black!30] (0) to[bend right=35] (2);
    \draw (1) to[bend left=30] (2);
    \draw (1) to[bend right=20] (2);
  \end{tikzpicture}
\end{center}

\begin{definition}[Path-quiver]\label{def:pathquiv}
  The \emph{path-quiver of length $k$}, denoted $\PQ_k$, is the quiver
  with $k+1$ vertices and $k$ arrows $([k+1], [k], \id, (i \mapsto i + 1))$.
\end{definition}
A path-quiver can be drawn as:
\begin{center} \tikz[baseline={([yshift=-.5ex]current bounding box.center)}, inner sep=1pt, -latex,scale=.7]{
\bfseries \node (0) at (0,0) {.}; \node (1) at (1,0) {.}; \node (2) at (2,0) {.}; \node (3) at (3.6,0) {.}; \node at (2.8,0) {\normalfont $\dots$}; \draw (0) -- (1); \draw (1) -- (2); \draw (3.2,0) -- (3); \draw[-] (2) -- (2.4,0); }
\end{center} with at least one vertex. Such a path-quiver is called \emph{nontrivial} if it has at least two vertices.

{
\renewcommand{\linebox}{\path[use as bounding box] (-.05,-1ex)--(.55,-1ex)}
If $0 \leq k \leq l$ are two integers, we denote by $\sop_{k,l}\colon \PQ_k \hookrightarrow \PQ_l$ the leftmost embedding of $\PQ_k$ into $\PQ_l$, \ie, such that $(\sop_{k,l})_V(0)=0$. If $k$ and $l$ are clear from the context, we draw $\sop_{k,l}$ as $\spd$ if $k\neq0$ and as $\spo$ if $k=0$. Moreover, we denote by $\tp_{k,l}\colon \PQ_k \hookrightarrow \PQ_l$ the rightmost embedding of $\PQ_k$ into $\PQ_l$, \ie, such that $(\tp_{k,l})_V(k) = l$. The corresponding drawings are $\tpd$ and $\tpo$. Moreover, if $P$ is a nontrivial path-quiver, we define $\sot_P\colon \quiver[]{(0,0),(.2,0)}{} \hookrightarrow P$ to be the embedding mapping the first vertex on the leftmost vertex of $P$ and the second vertex on the rightmost vertex of $P$. We denote this embedding $\std$.
}

If $\Q$ is a general quiver, a morphism of the form $p\colon \PQ_k \to \Q$, for some $k$, is called a \emph{path of\/ $\Q$ from $u$ to $v$ of length $k$}, where $u \coloneqq p(0)$ and $v \coloneqq p(k)$. Two paths $p_1\colon P_1 \hookrightarrow \Q$, $p_2\colon P_2 \hookrightarrow \Q$ of $\Q$ \emph{have the same extremities} if $p_1 \circ \sot_{P_1} = p_2 \circ \sot_{P_2}$. We denote by $\BP_{\!\Q}$ the set of pair of paths of $\Q$ having the same extremities. Let $\Q'$ be another general quiver and $m\colon \Q \to \Q'$ be a morphism. Then we define $m_*(p) \coloneqq m  \circ p$.

A general quiver is \emph{acyclic} if any path of this quiver is an embedding. The set of acyclic quivers in $\cyc S$ is denoted by $S$.

\begin{definition}[Free category]\label{def:freecat}
For a general quiver $\Q$, the \emph{free category over $\Q$}, denoted $\free \Q$ is the category with objects $\Ob_{\free\Q} = V_\Q$ whose morphisms $\Hom_{\free\Q}(u,v)$, for two vertices $u$ and $v$ are the paths from $u$ to $v$. The identity map from $u$ to $u$ is the empty path, and the composition is defined as the concatenation of paths.
\end{definition}

Note that a morphism $m$ of quivers induces a functor between the corresponding categories that we denote by $\Phi_m$.  In the other direction, any small category $\Cat$ has an \emph{underlying quiver}.

\subsection{Diagrams}\label{ssec:diag}

We can now introduce diagrams in a category, and a few useful specific examples thereof.
\begin{definition}[Diagram]\label{def:diag}
For any category $\Cat$ and any quiver $Q$, a \emph{diagram in $\Cat$ over $Q$} is a functor from $\free Q$ to $\Cat$.
\end{definition}

Let $P$ be a path-quiver from vertex $u$ to vertex $v$. To a diagram $D\colon \free{P} \to \Cat$ over $P$ one can associate the corresponding composition of morphisms in the category $\Cat$, which is an element of $\Hom_\Cat(D(u), D(v))$. We denote this element $\comp(D)$. By convention, when the path-quiver $P$ is trivial, $\comp(D)$ is the identity map $\id_{D(u)}$.

\begin{definition}[Pullback]
Let $Q, Q'$ be two quivers and let $D$ be a diagram over $Q$. Let $m\colon Q' \to Q$ be a morphism of quivers. We define the \emph{pullback of $D$ by $m$}, denoted $m^*(D)$, as the diagram $D \circ \Phi_m$ over $Q'$.
\end{definition}

\begin{definition}[Commutative diagram]\label{ssec:diagram_commutativity}
For any category $\Cat$ and any quiver $Q$,a  diagram $D$ over $Q$ is \emph{commutative} if $\comp(p_1^*(D))$ and $\comp(p_2^*(D))$ coincide for any two paths $p_1$ and $p_2$ in $Q$ with same extremities, that is:
\begin{equation} \label{eqn:diagram_commutativity}
  \forall (p_1, p_2) \in \BP_{\!Q}, \quad \comp(p_1^*(D)) = \comp(p_2^*(D)).
\end{equation}
\end{definition}

The next lemma allows to reduce the number of distinct diagrams involved in a formula.

\begin{lemma}
  Let $\Cat$ be some category. Consider two diagrams $D_1$ and $D_2$ in $\Cat$ over $Q_1$ and $Q_2$, respectively. If the pullback of $D_1$ by $m_1$ coincides with the pullback of $D_2$ by $m_2$, \ie, $$m^*_1(D_1) = m^*_2(D_2)$$ then, there exists a unique diagram $D'$ over $Q'$ such that $D_1$ (resp. $D_2$) is the pullback of $D'$ by $m'_1$ (resp. $m'_2$), \ie,
  $$ D_1 = m'^*_1(D') \quad \textrm{and} \quad D_2 = m'^*_2(D')$$
\end{lemma}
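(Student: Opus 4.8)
The plan is to construct the diagram $D'$ by gluing $D_1$ and $D_2$ along their common pullback. Write $Q_0$ for the common source of $m_1$ and $m_2$, so that $m_1\colon Q_0 \to Q_1$ and $m_2\colon Q_0 \to Q_2$, and take $Q'$ together with $m'_1\colon Q_1 \to Q'$ and $m'_2\colon Q_2 \to Q'$ to be the pushout of the span $Q_1 \xleftarrow{m_1} Q_0 \xrightarrow{m_2} Q_2$ of quivers; concretely, $Q'$ is the quotient of the disjoint union $Q_1 \sqcup Q_2$ identifying the two images of every vertex and every arrow of $Q_0$, which is again a finite quiver (relabel to land in $\cyc S$) with $m'_1, m'_2$ the canonical maps. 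By the very definition of the pullback of a diagram, the hypothesis $m^*_1(D_1) = m^*_2(D_2)$ unfolds to $D_1 \circ \Phi_{m_1} = D_2 \circ \Phi_{m_2}$, which says precisely that the pair $(D_1, D_2)$ is a cocone under the span $\free{Q_1} \xleftarrow{\Phi_{m_1}} \free{Q_0} \xrightarrow{\Phi_{m_2}} \free{Q_2}$ of small categories.

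The cleanest way to conclude is then to observe that the free-category functor $\free{-}$, being left adjoint to the functor sending a small category to its underlying quiver, preserves all colimits; in particular it turns the quiver pushout above into a pushout of categories $\free{Q'} \cong \free{Q_1} \sqcup_{\free{Q_0}} \free{Q_2}$ with legs $\Phi_{m'_1}$ and $\Phi_{m'_2}$. Applying the universal property of this pushout to the cocone $(D_1, D_2)$ yields a \emph{unique} functor $D'\colon \free{Q'} \to \Cat$ with $D' \circ \Phi_{m'_1} = D_1$ and $D' \circ \Phi_{m'_2} = D_2$, that is, a unique diagram $D'$ over $Q'$ with $m'^*_1(D') = D_1$ and $m'^*_2(D') = D_2$ --- which is exactly the claim.

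Alternatively, and more concretely, one can exhibit $D'$ by hand: every vertex of $Q'$ is the image of a vertex of $Q_1$ or of $Q_2$ (coherently, via the hypothesis, when it comes from $Q_0$), which determines $D'$ on objects; a morphism of $\free{Q'}$ is a path in $Q'$, that is, a composable list of arrows each coming from $Q_1$ or from $Q_2$, and one sets $D'$ of such a path to be the composite of the corresponding $D_1$- or $D_2$-images of its arrows. Functoriality and the identities $m'^*_i(D') = D_i$ are then immediate, and uniqueness follows since $D'$ is forced on all objects and on the generating arrows, hence on every path. The only step demanding genuine care --- and the one I would foreground --- is the well-definedness of this construction on the glued part: one must check that the identifications made in forming $Q'$ are consistent with the data, which is exactly what $m^*_1(D_1) = m^*_2(D_2)$ delivers on vertices and on arrows, and that passing to $\free{Q'}$ creates no new relations among morphisms beyond composability, so that the composite defining $D'$ on a path does not depend on how the path is split into pieces. (If $m_1$ and $m_2$ are moreover embeddings, one notes in addition that $m'_1$ and $m'_2$ are embeddings too, since pushouts of injections along injections are again injections, computed componentwise on vertices and on arrows, so the statement stays within the setting of the signature $\Sigma$.)
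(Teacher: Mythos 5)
Your proof is correct, but it is considerably more elaborate than what the paper offers: the paper dismisses this lemma with ``Immediate'', and the content it has in mind is essentially your second, hands-on argument. Since $\free{Q'}$ is \emph{free}, a diagram over $Q'$ is uniquely determined by its values on the vertices and arrows of $Q'$; the pushout configuration says every vertex and arrow of $Q'$ lies in $\im(m'_1)\cup\im(m'_2)$ with overlap exactly $\im(m'_1\circ m_1)$, and the hypothesis $m_1^*(D_1)=m_2^*(D_2)$ makes the two candidate values agree on that overlap, so $D'$ exists and is forced; no coherence conditions beyond this arise, precisely because there are no relations among paths in a free category. Your first argument --- $\free{-}$ is a left adjoint to the underlying-quiver functor, hence sends the quiver pushout to a pushout in $\mathbf{Cat}$, and $(D_1,D_2)$ is a cocone --- is a valid and slicker packaging of the same fact, but note two small frictions with the paper's setup: (i) the lemma takes the cospan $(Q',m'_1,m'_2)$ as \emph{given} data of a pushout configuration rather than constructing it, so you should either work with the given $Q'$ directly or invoke the (claimed) uniqueness of pushout configurations up to isomorphism; and (ii) the adjunction argument silently identifies the paper's element-wise ``pushout configuration'' with the categorical pushout of quivers --- this identification is clear when $m_1$ and $m_2$ are embeddings (the case relevant to $\Sigma$), but is worth a one-line check rather than an assumption. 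Your closing remark that pushouts of embeddings along embeddings are again embeddings matches the observation the paper makes right after Definition~2.8.
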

\begin{proof} Immediate. \end{proof}

\begin{definition}[Pushout] \label{def:pushout}

Consider four quivers $Q, Q_1, Q_2, Q'$ and four maps $m_1\colon Q \to Q_1$, $m_2\colon Q\to Q_2$, $m_1'\colon Q_1 \to Q'$ and $m_2'\colon Q_2 \to Q'$:
\begin{center} \begin{tikzcd}[sep=scriptsize]
  Q \rar{m_1} \dar{m_2} & Q_1 \dar{m'_1} \\
  Q_2 \rar{m'_2} & Q'.
\end{tikzcd} \end{center}

This data is a \emph{pushout configuration} if:
\begin{itemize}
  \item $m'_1 \circ m_1 = m'_2 \circ m_2$,
  \item $Q' = \im(m'_1) \cup \im(m'_2)$, (i.e., $V_{Q'} = \im(m'_{1,V}) \cup \im(m'_{2,V})$ and $A_{Q'} = \im(m'_{1,A}) \cup \im(m'_{2,A})$),
  \item $\im(m'_1 \circ m_1) = \im(m'_1) \cap \im(m'_2)$.
\end{itemize}
\end{definition}
Consider a pushout configuration as in \cref{def:pushout}. The triple $(Q', m_1', m_2')$ is called a \emph{pushout of\/ $(Q, Q_1, Q_2, m_1, m_2)$}. Such a pushout always exists, and any two such pushouts are isomorphic. Moreover, if $m_1$ and $m_2$ are embeddings, then so are $m'_1$ and $m'_2$.

\begin{figure}[b]
  \[ P_1 = \quiver[]{(0,0),(.66,.2),(1.33,.2),(2,0)}{0/1,1/2,2/3} \qquad P_2 = \quiver[]{(0,0),(1,-.2),(2,0)}{0/1,1/2} \qquad Q' = \quiver[]{(0,0),(.66,.2),(1.33,.2),(2,0),(1,-.2)}{0/1,1/2,2/3,0/4,4/3} \]
  \caption{A pushout $Q'$ of two path-quivers $P_1$ and $P_2$ with respect to $\sot_{P_1}$ and $\sot_{P_2}$. \label{fig:pushout_path-quivers}}
\end{figure}

\begin{lemma}
  Let $\Cat$ be some category. Consider two diagrams $D_1$ and $D_2$ in $\Cat$ over $Q_1$ and $Q_2$, respectively. If the pullback of $D_1$ by $m_1$ coincides with the pullback of $D_2$ by $m_2$, \ie,
  \[m^*_1(D_1) = m^*_2(D_2)\]
  then there exists a unique diagram $D'$ over $Q'$ such that $D_1$ (resp. $D_2$) is the pullback of $D'$ by $m'_1$ (resp. $m'_1$), \ie,
\[ D_1 = m'^*_1(D') \quad \textrm{and} \quad D_2 = m'^*_2(D')\]
\end{lemma}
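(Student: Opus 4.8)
The plan is to construct $D'$ by gluing $D_1$ and $D_2$ over $Q'$, exploiting the pushout configuration relating $Q, Q_1, Q_2, Q'$. Recall that a diagram over a quiver $R$ is a functor out of $\free{R}$, and that such a functor is completely determined by — and may be prescribed arbitrarily on — the vertices and the arrows of $R$, subject only to the obvious source/target typing constraints; functoriality is then automatic. So it suffices to define $D'$ on each vertex and each arrow of $Q'$. Since $Q' = \im(m'_1) \cup \im(m'_2)$, every vertex $v$ of $Q'$ equals $m'_{1,V}(v_1)$ for some vertex $v_1$ of $Q_1$, or $m'_{2,V}(v_2)$ for some vertex $v_2$ of $Q_2$; I would then set $D'(v) \coloneqq D_1(v_1)$ in the first case, $D'(v) \coloneqq D_2(v_2)$ in the second, and symmetrically on arrows using $m'_{1,A}, m'_{2,A}, D_1, D_2$.

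The real content is the well-definedness of this assignment, and this is where the hypothesis $m^*_1(D_1) = m^*_2(D_2)$ and the third pushout axiom get used. Consider first two preimages $v_1, \tilde v_1$ of the same vertex under $m'_{1,V}$. Since any two pushouts are isomorphic, we may compute with the explicit one, in which $V_{Q'}$ is the quotient of $V_{Q_1} \sqcup V_{Q_2}$ identifying $m_{1,V}(q)$ with $m_{2,V}(q)$ for each vertex $q$ of $Q$; thus $v_1$ and $\tilde v_1$ are joined by a finite zigzag alternating between $V_{Q_1}$ and $V_{Q_2}$ in which every elementary identification has the form $m_{1,V}(q) = m_{2,V}(q)$. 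Along each such step, $D_1(m_{1,V}(q)) = (m^*_1 D_1)(q) = (m^*_2 D_2)(q) = D_2(m_{2,V}(q))$, so propagating along the zigzag yields $D_1(v_1) = D_1(\tilde v_1)$; the same argument handles the fibers of $m'_{2,V}$ and the analogous statements for arrows. Next, if $v \in \im(m'_1) \cap \im(m'_2)$, the third pushout axiom provides a vertex $q$ of $Q$ with $v = m'_{1,V}(m_{1,V}(q)) = m'_{2,V}(m_{2,V}(q))$; combining the fiberwise well-definedness just established with $(m^*_1 D_1)(q) = (m^*_2 D_2)(q)$ shows $D_1(v_1) = D_2(v_2)$ for any preimages $v_1$ in $Q_1$ and $v_2$ in $Q_2$, and likewise for arrows. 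Hence $D'$ is well defined. (In the case relevant to $\Sigma$, where $m_1, m_2$, hence $m'_1, m'_2$, are embeddings, all fibers are singletons and only this last step is needed.)

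It then remains to check the two pullback equations and uniqueness, both routine: $m_1'^*(D')$ sends a vertex $v_1$ of $Q_1$ to $D'(m'_{1,V}(v_1))$, which equals $D_1(v_1)$ by construction and well-definedness, and similarly on arrows, so $m_1'^*(D') = D_1$; symmetrically $m_2'^*(D') = D_2$. If $D''$ over $Q'$ satisfies both equations, then for any vertex or arrow of $Q'$, picking a preimage in $Q_1$ or in $Q_2$ forces the value of $D''$ to equal that of $D'$, and since $Q' = \im(m'_1) \cup \im(m'_2)$ this gives $D'' = D'$. Conceptually, the whole statement is just the universal property of the pushout: $\free{-}$ is left adjoint to the underlying-quiver functor, hence preserves pushouts, so $\free{Q'}$ is the pushout of $\free{Q_1} \leftarrow \free{Q} \to \free{Q_2}$; the hypothesis says precisely that $D_1$ and $D_2$ restrict to the same functor on $\free{Q}$, and the universal property of that pushout delivers $D'$ together with its uniqueness — the only thing to verify being that \cref{def:pushout} does describe the categorical pushout of quivers, which is exactly the existence-and-essential-uniqueness assertion recorded immediately after it. The main, and essentially only, obstacle is therefore the well-definedness of $D'$: it is the step that consumes the hypothesis $m^*_1(D_1) = m^*_2(D_2)$ together with the less obvious third clause of the pushout axioms.
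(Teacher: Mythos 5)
Your proof is correct; the paper itself dismisses this lemma with the single word ``Immediate'', so there is no written argument to compare against, and your gluing construction (define $D'$ on vertices and arrows of $Q'$ via preimages, check well-definedness from the pushout axioms and the hypothesis $m_1^*(D_1)=m_2^*(D_2)$, then verify the two pullback equations and uniqueness) is exactly the elaboration the authors are taking for granted. Your closing remark that the whole statement is the universal property of the pushout of quivers, transported along the left adjoint $\free{-}$, is the cleanest way to see it and matches the role the lemma plays in justifying the $\PushoutE$ axioms.
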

\begin{proof} Immediate. \end{proof}

\subsection{Category relations, path relations and quotient categories} \label{ssec:quotient_category}

We first name relations on the morphisms of a category that are compatible with composition:

\begin{definition}[Category relation]\label{def:catrel}
  A \emph{category relation $r$} on $\Cat$ is by definition the data of an equivalence relation $r_{A,B} \subseteq \Hom(A,B)^2$ for any pair of objects $A$ and $B$ such that, for any objects $A,B,C$ and any morphisms $f,g \in \Hom(A,B)$ and $f',g' \in \Hom(B,C)$,
  \[ f \rel{r} g \ \text{ and } \ f' \rel{r} g' \quad \Longrightarrow \quad f' \circ f \rel{r} g' \circ g, \]
  where we write $h \rel{r} h'$ if $h$ and $h'$ are in relation, \ie,
  $(h, h') \in r_{A,B}$. Such a relation is said \emph{complete} if
  $r_{A,B} = \Hom(A,B)^2$ for any pair of objects $A$ and
  $B$.
\end{definition}

\begin{lemma}[Quotient category]\label{sec:quotient_category}
Given such a category relation, we define the \emph{quotient category $\Cat/r$} given by $\Ob_{\Cat/r} := \Ob_\Cat$ and $\Hom_{\Cat/r}(A,B) = \Hom_{\Cat}(A,B)/r_{A,B}$ is indeed a category for the induced composition.
\end{lemma}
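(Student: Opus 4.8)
The plan is to equip $\Cat/r$ with the composition inherited from $\Cat$ and then verify the category axioms by transferring them along the canonical quotient maps $\Hom_\Cat(A,B) \to \Hom_{\Cat}(A,B)/r_{A,B}$. Concretely, I would first define composition on classes: given objects $A, B, C$ and morphisms $f \in \Hom_\Cat(A,B)$, $f' \in \Hom_\Cat(B,C)$, I set $[f'] \circ [f] := [f' \circ f]$, writing $[h]$ for the $r$-class of a morphism $h$, and I take $[\id_A]$ to be the identity on $A$ in $\Cat/r$.

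The only point requiring an argument is that this composition is well defined, \ie, independent of the chosen representatives. Here I would invoke precisely the compatibility condition imposed on a category relation in \cref{def:catrel}: if $f \rel{r} g$ and $f' \rel{r} g'$, then $f' \circ f \rel{r} g' \circ g$, so $[f' \circ f] = [g' \circ g]$. This is the sole place where the hypothesis on $r$ enters, and the definition of category relation has been tailored exactly so that composition descends to the quotient; I therefore expect no genuine obstacle here beyond spelling this out.

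The remaining axioms then come for free. For the unit laws, $[\id_B] \circ [f] = [\id_B \circ f] = [f]$ and $[f] \circ [\id_A] = [f \circ \id_A] = [f]$, using the identity laws in $\Cat$. For associativity, $([h] \circ [g]) \circ [f] = [(h \circ g) \circ f] = [h \circ (g \circ f)] = [h] \circ ([g] \circ [f])$ by associativity in $\Cat$. Since moreover each $\Hom_{\Cat/r}(A,B)$ is a quotient of a set, no size issue arises, and $\Cat/r$ is a category.
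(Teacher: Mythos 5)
Your proof is correct and is exactly the standard argument the paper leaves as ``Immediate'': well-definedness of composition on classes via the compatibility condition of \cref{def:catrel}, with the unit and associativity laws inherited from $\Cat$. Nothing to add.
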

\begin{proof}
  Immediate.
\end{proof}
We now name the relations between the paths of a general quiver induced by the corresponding identities of morphism composites, in the associated free category.

\begin{definition}\label{def:pathrel}
  A \emph{relation between paths with same extremities in $\Q$} is by definition a subset of $\BP_{\!\Q}$. If $r \subseteq \BP_{\!\Q}$ is such a relation then, for $(p,q) \in \BP_{\!\Q}$, we write $p \rel{r} q$ if $(p,q) \in r$. The complete path relation on $\Q$, \ie, $\BP_{\!\Q}$, is denoted $\tot_\Q$. Note that $\BP_{\!\Q} = \bigsqcup_{A,B \in \Ob_{\free{\Q}}} \Hom_{\free{\Q}}(A,B)^2$. Such a relation $r$ is called a \emph{path relation} if it is a category relation on $\free{\Q}$.
\end{definition}
For instance, the equality of compositions in a small category $\Cat$ induces a path relation on the underlying quiver $\Q$. If $r_1, \dots, r_l$ are some relations between paths with same extremities, we denote by $(r_1, \dots, r_l)$ the smallest path relation containing $r_1$, \dots, $r_l$.

Let $\Q'$ be another general quiver and let $m\colon \Q \to \Q'$ be a morphism. If $r \subseteq \BP_{\!\Q}$, we denote by $m_*(r)$ the relation induced by the image by $m$ of $r$ in $\BP_{\!\Q'}$.

\subsection{Many-sorted logic, categorical interpretation}\label{ssec:many}

We first recall a few basic definitions mostly pertaining to many-sorted logic, applied to the signatures introduced by \cref{def:sig}, and we set the corresponding notations.

Let us first fix a countable set $X$, so that for each quiver $Q$ in $\cyc S$ (resp. in $S$), elements of the set $X_Q := X \times \{Q\}$ are the \emph{variables of sort $Q$}. A \emph{term} of sort $Q$ either is a variable of sort $Q$ or has the form $\restr_{m\colon Q' \to Q}(t)$, with $t$ a term of sort $Q$. When possible, we leave the sorts implicit and simplify the notation of symbol $\restr_{m\colon Q' \to Q}(t)$ into $\restr_m$.

We denote the equality symbols by $\eqd$. An \emph{atom} is thus of the form $s \eqd t$ with $s$ and $t$ two terms of the same sort, or of the form $\com_Q(t)$ with $t$ a term of sort $Q$. We consider first-order many-sorted formulas and write the sort of quantifiers as a subscript, \ie, $\exists_Q x_Q, \phi$ and $\forall_Q x_Q, \phi$ where $Q \in \cyc S, x_Q \in X_Q$ and $\phi$ is a formula. In what follows, we however drop sort subscripts when they are clear from the context.

We write $\exists! y, P(y)$ for formula $\bigl(\exists y,\  P(y)\bigr)   \wedge  \bigl(\forall y_1, y_2,\ P(y_1) \wedge P(y_2) \to y_1 \eqd y_2\bigr)$. A formula with free variables $x_1,\dots, x_n$ of respective sorts $Q_1,\dots,Q_n$ is said to be of \emph{arity} $Q_1\times \dots Q_n$. \begin{definition}[Models] Let $Y\subseteq X \times \cyc S$ be a set of variables. An \emph{interpretation} (also called a \emph{model}) \emph{$\M$ of\/ $\cyc\Sigma$ over $Y$} is a map such that
\begin{itemize}
  \item each sort $Q$ is mapped to a domain set, denoted $\M_Q$,
  \item each variable $x \in Y$ of sort $Q$ is mapped to an element $x^\M \in \M_Q$,
  \item each function $\restr_{m\colon Q'\rightarrow Q}$ is mapped to function $\restr^\M_{m\colon Q'\to Q}\colon \M_Q \to \M_{Q'}$,
  \item each predicate $\com_Q$ is mapped to a subset $\com^\M_Q$ of $\M_Q$.
\end{itemize}
An interpretation is extended in the usual way to all formulas whose free variables are in $Y$. An interpretation $\M$ of\/ $\cyc\Sigma$ is an interpretation over the empty set.

We define interpretations (models) of $\Sigma$ in a similar fashion.
\end{definition}

The \emph{evaluation of a formula $\phi$ under an interpretation $\M$} is either true or false depending on the truthiness of the interpretation of the formula. If $\phi$ has no free variable, we write $\M \models \phi$ if $\phi^\M$ is true.

A \emph{theory} is a set of formulas for a certain signature. Remember that signature $\Sigma$ only differs from $\cyc \Sigma$ by restricting the allowed sorts (to the acyclic quivers).  If $\cyc T$ is a theory for $\cyc \Sigma$, we thus define the \emph{restriction of $\cyc T$ to $\Sigma$}, denoted $\cyc T|_\Sigma$, as the subset of formulas of $\cyc T$ which are well-formed with respect to $\Sigma$. A \emph{model of a theory} is a model such that the interpretation of every formula of the theory is true.

The  prototypical models of the signatures introduced in \cref{def:sig} are actually diagrams over a certain category.

\begin{definition}[Categorical interpretation]\label{def:catinterp}
To each small category $\Cat$, we associate an interpretation of $\Sigma$, \resp $\cyc\Sigma$, that we also denote by $\Cat$, as follows.
\begin{itemize}
  \item To each sort $Q$ we associate the set $\Cat_Q$ of diagrams in $\Cat$ over $Q$.
  \item $\restr_m$ is interpreted as the function mapping a diagram $D$ to the diagram $m^*(D)$.
  \item $\com^\Cat_Q$ is the set of commutative diagrams in $\Cat$ over $Q$.
\end{itemize}
We call such an interpretation a \emph{categorical interpretation of\/ $\Sigma$, \resp $\cyc\Sigma$}.
\end{definition}

\section{A theory for diagrams over small categories}\label{sec:small}

This section introduces a theory whose models of can be seen as categorical
interpretations.

\subsection{Axioms}\label{ssec:definition_Tcat}

We now introduce the different axioms of the theory. A formula $F$ with free variables $x_1,\dots,x_n$ is written $F(x_1,\dots,x_n)$ so as to clarify the sorts of each variable in the arity of $F$.

\paragraph*{Existence and uniqueness of the empty diagram}\label{sssec:empty_diagram_uniq}
\[ \EmptyEU\colon\qquad \exists!_\emptyset x, \quad x \eqd x \]

\paragraph*{Compatibility of restrictions} \label{sssec:restr_comp}

For any quivers $Q, Q', Q''$ and morphisms $m\colon Q \to Q'$ and $m'\colon Q' \to Q''$, we define:
\[ \RestrComp_{m,m'}\colon\qquad \forall_{Q''} x'', \quad \restr_m(\restr_{m'}(x'')) \eqd \restr_{m' \circ m}(x''). \]

\paragraph*{Pushout} \label{sssec:pushout}

For any pushout configuration as in \cref{def:pushout}, and using the same notations as this definition, we define the following formulas of arity $Q_1 \times Q_2 \times Q'$:
\[ \Pushout_{m'_1,m'_2}(x_1,x_2,x')\colon\qquad \restr_{m'_1}(x') \eqd x_1 \ \wedge \ \restr_{m'_2}(x') \eqd x_2. \]
\begin{align*}
  \PushoutE_{m_1,m_2,m'_1,m'_2}\colon\qquad \forall x_1, x_2,\quad&  \restr_{m_1}(x_1) \eqd  \restr_{m_2}(x_2) \\
  &\qquad\to \  \exists!\,x', \ \Pushout_{m'_1,m'_2}(x_1,x_2,x').
\end{align*}

\paragraph*{Composition}

The following formula, of arity $\Qmap[] \times \Qmap[] \times \Qmap[]$, describes composite of arrows:
\begin{align*}
  \Comp(x,y,z)\colon\qquad& \exists_{\Qcomp} w,\quad \restr_{\mapicomp}(w) \eqd x \ \wedge \ \restr_{\mapiicomp}(w) \eqd y \\
  &\qquad \wedge \ \restr_{\mapiocomp}(w) \eqd z \ \wedge \ \com(w)
\end{align*}
while the following one ensures the existence of compositions:
\[ \CompE\colon\qquad \forall x,y, \exists z, \quad \Comp(x,y,z). \]

\paragraph*{Equality of nontrivial paths}

For any two nontrivial path-quivers $P_1$ and $P_2$, and $Q'$, $m'_1$, $m'_2$ such that the following diagram forms a pushout configuration (as for instance on~\cref{fig:pushout_path-quivers})
\begin{center}
  \begin{tikzcd}[sep=scriptsize]
  \Qtwodots[] \rar{\sot_{P_1}} \dar{\sot_{P_2}} & P_1 \dar{m'_1} \\
  P_2 \rar{m'_2} & Q'
  \end{tikzcd}
\end{center}
we define the following formula of arity $P_1 \times P_2$:
\begin{align*}
  \EqPath_{P_1,P_2}(x_1,x_2)\colon\qquad
    & \restr_{\std}(x_1) \eqd \restr_{\std}(x_2) \\
    &\qquad \wedge \quad \bigl(\forall x, \ \Pushout_{m'_1, m'_2}(x_1,x_2,x) \to \com(x)\bigr).
\end{align*}

\paragraph*{Identity}

The following formula, of arity $\Qdot[] \times \Qmap[]$, defines the identity map:
\begin{align*}
  \Id(x, y)\colon\qquad
    &\restr_{\dotimap}(y) \eqd x \quad \wedge \quad \forall z, w, \\
    & \ \bigl(\Comp(y,z,w) \to \EqPath(z,w)\bigr) \ \wedge \ \bigl(\Comp(z,y,w) \to \EqPath(z,w)\bigr)
\end{align*}
and the following formula ensures the existence of identity maps.
\[ \IdE\colon\qquad \forall x, \exists y, \quad \Id(x,y). \]

\paragraph*{Equality for general paths}

For any nontrivial path-quiver $P$, we define the following formulas:
\begin{gather*}
  \EqPath_{\Qdot,\Qdot}(x,y)\colon\qquad x \eqd y, \\
  \EqPath_{\Qdot,P}(x,y)\colon\qquad \exists z,\quad \Id(x,z) \ \wedge \ \EqPath_{\Qmap[.5],P}(z,y), \\
  \EqPath_{P,\Qdot}(x,y)\colon\qquad \EqPath_{\Qdot,P}(y,x).
\end{gather*}

Hence, we can see $\EqPath$ as a relation with arity any pair of path-quivers. We ensure this relation to be an equivalence relation by defining for any path-quivers $P_1, P_2$ and $P_3$ the three formulas $\EqPathRefl_{P_1}$, $\EqPathSym_{P_1,P_2}$ and $\EqPathTrans_{P_1,P_2,P_3}$ stating that the relation $\EqPath$ is respectively reflexive, symmetric and transitive.

\medskip

We also make sure to enforce the properties of a category relation. For this purpose, for any four path-quivers $P_1$, $P_1'$, $P_2$, and $P_2'$, of respective length $k_1, k_1', k_2$ and $k_2'$, we define the following formula where bound variables $x_1, x_2, x'_1, x'_2$ respectively have sort $P_1$, $P_2$, $P'_1$ and $P'_2$ and the sort of $x''_i$, $i \in \{1,2\}$, is $\PQ_{k_i+k'_i}$.
\begin{align*}
  &\EqPathConcat_{P_1,P_2,P'_1,P'_2}\colon\qquad \forall x_1, x_2, x'_1, x'_2,\\
    &\quad \EqPath(x_1,x_2) \ \wedge\  \EqPath(x'_1,x'_2) \ \wedge\  \restr_{\tpo}(x1) \eqd \restr_{\spo}(x_1') \\
    &\quad \ \to\quad \forall x''_1, x''_2,\quad \Pushout_{\spd,\tpdi}(x_1,x'_1,x''_1) \ \wedge\ \Pushout_{\spd,\tpdi}(x_2,x'_2,x''_2) \\
    &\hspace{4cm} \to\quad \EqPath(x''_1,x''_2).
\end{align*}

\paragraph*{Commutativity}

We relate commutativity and equality by the following formula:
\[ \ComEq\colon\qquad \forall_{\Qcobimap} x,\quad \com(x)\ \to \ \restr_{\quiverr[.5]{(0,0),(1,0)}{0/1/-30,0/1/30}[1]}(x)\eqd\restr_{\quiverr[.5]{(0,0),(1,0)}{0/1/30,0/1/-30}[1]}(x). \]

For any quiver $Q$, the following formula provides an analogue of the notion of commutativity of diagrams given in \cref{ssec:diagram_commutativity}, \cref{eqn:diagram_commutativity}:
\[ \PathCom_Q\colon\qquad \forall_Q x, \bigwedge_{(p_1, p_2)\in\BP_{\!Q}}\hspace{-1.6em} \EqPath(\restr_{p_1}(x),\restr_{p_2}(x)) \quad \leftrightarrow \quad \com(x). \]
where we recall that $\BP_{\!Q}$ denotes the set of pair of paths of $Q$ having the same extremities.

\begin{definition}
Theory $\cyc\T_\cat$, over signature $\cyc\Sigma$, consists of the following formulas:
\begin{itemize}
  \item $\EmptyEU$, $\CompE$, $\IdE$, $\ComEq$,
  \item $\RestrComp_{m,m'}$ for any pair of maps $m$ and $m'$ as in \cref{sssec:restr_comp},
  \item $\PushoutE_{m_1,m_2,m_1',m_2'}$ for a pushout configuration as in \cref{def:pushout},
  \item $\EqPathRefl_{P_1}, \EqPathSym_{P_1,P_2}, \EqPathTrans_{P_1,P_2,P_3}, \EqPathConcat_{P_1,P_2,P_3,P_4}$ for any quadruple of path-quivers $P_1, P_2, P_3$ and $P_4$,
  \item $\PathCom_Q$ for any quiver $Q$.
\end{itemize}
Theory $\T_\cat$ is defined as the restriction of $\cyc\T_\cat$ to $\Sigma$.
\end{definition}

\subsection{Models} \label{sec:proof_category_theory}

Models of $\T_\cat$, \resp $\cyc\T_\cat$ are in fact exactly what we have called categorical interpretations.
\begin{theorem} \label{thm:category_theory}
  Every categorical interpretation of $\Sigma$, \resp $\cyc\Sigma$, is a model of\/ $\T_\cat$, \resp $\cyc\T_\cat$. Moreover, any model $\M$ of\/ $\T_\cat$, \resp $\cyc\T_\cat$, has an isomorphic categorical interpretation.
\end{theorem}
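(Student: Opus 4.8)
The plan is to prove the two halves of the statement separately, with the second (reconstruction of a category from an abstract model) being the substantial one.

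\medskip

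\textbf{First direction: categorical interpretations are models.} Here I would simply go through each axiom of $\cyc\T_\cat$ and check it holds in a categorical interpretation $\Cat$. Most are immediate unwindings of definitions. $\EmptyEU$ holds because there is exactly one functor from the free category on the empty quiver. $\RestrComp_{m,m'}$ is functoriality of pullback: $(m'\circ m)^*(D) = D\circ\Phi_{m'\circ m} = D\circ\Phi_{m'}\circ\Phi_m = m^*(m'^*(D))$. $\PushoutE$ follows from the previous lemma on diagrams glued along a pullback (the one stated twice in the excerpt), once one observes that a pushout configuration of quivers gives, after applying $\free{-}$, exactly the gluing situation needed, and that $Q' = \im(m'_1)\cup\im(m'_2)$ guarantees uniqueness. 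For $\CompE$, $\IdE$ one exhibits the obvious diagram over $\Qcomp$, resp.\ over $\Qmap[]$, built from the composite, resp.\ identity, in $\Cat$, and checks commutativity. The $\EqPath$-axioms ($\EqPathRefl$, $\EqPathSym$, $\EqPathTrans$, $\EqPathConcat$) amount to the fact that equality of composites in $\Cat$ is an equivalence relation compatible with concatenation, i.e.\ that it is a category relation on $\free Q$; one has to check that the definition of $\EqPath_{P_1,P_2}$ via the pushout $Q'$ of $P_1$ and $P_2$ along their extremities really does express $\comp(p_1^*(D)) = \comp(p_2^*(D))$, which is where the pushout-of-two-paths picture in \cref{fig:pushout_path-quivers} is used. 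Finally $\ComEq$ and $\PathCom_Q$ are essentially \cref{def:diag} of commutativity, \cref{eqn:diagram_commutativity}, rephrased through restrictions along paths.

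\medskip

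\textbf{Second direction: every model is (isomorphic to) a categorical interpretation.} Given a model $\M$ of $\cyc\T_\cat$, I must manufacture a small category $\Cat$ and an isomorphism $\M \simeq \Cat$ (isomorphism of interpretations: a family of bijections $\M_Q \to \Cat_Q$ commuting with the $\restr_m$ and matching the $\com_Q$). The natural recipe: take $\Ob_\Cat := \M_{\Qdot}$ (elements of the one-vertex sort); take $\Hom_\Cat(A,B)$ to be the set of $f \in \M_{\Qmap[]}$ whose source restriction $\restr_{\dotiomap}(f)$ equals $A$ and target restriction $\restr_{\dotimap}(f)$ equals $B$, quotiented by the relation $\EqPath$ on $\Qmap[]$. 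Composition is defined via the $\Comp$ predicate: by $\CompE$ a composite witness $z$ exists, by the $\Qcomp$-pushout axioms together with $\EqPath$-compatibility it is unique up to $\EqPath$, so composition descends to the quotient; associativity and unit laws ($\IdE$, the $\Id$ predicate) follow from $\EqPathConcat$ and the pushout/restriction bookkeeping. One then defines the comparison map $\Cat_Q \to \M_Q$: a diagram $D$ over $Q$ assigns to each arrow $a$ of $Q$ (viewed as a path, hence a map $\PQ_1 \to Q$) an element of $\Hom_\Cat$, i.e.\ a class in $\M_{\Qmap[]}/\EqPath$; using that $Q$ is a pushout (iterated) of its arrows glued at vertices, and using $\PushoutE$ to glue the corresponding elements of $\M_{\Qmap[]}$ into a single element of $\M_Q$, one gets a well-defined element; conversely restriction along each arrow recovers the data, giving the inverse. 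Surjectivity and injectivity of this map, and compatibility with all $\restr_m$, come from $\PushoutE$ and $\RestrComp$; compatibility with $\com_Q$ is exactly $\PathCom_Q$, which says $\com^\M_Q$ is the set of elements all of whose path-restrictions are $\EqPath$-related, i.e.\ corresponds under the bijection to commutative diagrams.

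\medskip

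\textbf{Main obstacle.} The delicate part is the construction of $\Cat$ from $\M$ and the verification that the comparison is a well-defined bijection: one needs that an element of $\M_Q$ is determined by, and can be assembled from, its restrictions to the arrows of $Q$ (each of sort $\Qmap[]$), with the gluing controlled by $\PushoutE$; this requires an induction on the number of arrows of $Q$, building $Q$ as a sequence of pushouts of $\Qmap[]$'s and isolated vertices along subquivers, and checking the pushout configurations genuinely satisfy the three bullet conditions of \cref{def:pushout} at each step. Threading the $\EqPath$ equivalence relation through this gluing — so that composition, identities and associativity in $\Cat$ are well-defined and satisfy the category axioms, and so that $\com_Q$ matches commutativity — is where essentially all the axioms get used, and is the real content of the proof. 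The acyclic (\resp general) distinction plays no role here beyond restricting which sorts exist, so the two cases are handled uniformly.
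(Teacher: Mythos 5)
Your overall architecture matches the paper's. In particular, the ``main obstacle'' you identify --- that an element of $\M_Q$ is determined by, and can be assembled from, its restrictions to the vertices and arrows of $Q$, proved by writing $Q$ as an iterated pushout of $\Qdot[]$'s and $\Qmap[]$'s and checking the pushout configurations at each step --- is precisely the key lemma of the paper's proof (\cref{lem:pushout}), and your first direction is the same routine verification. The one structural difference is in how the category is extracted from $\M$: the paper forms the general quiver $\Q$ with $V_\Q = \M_{\Qdot}$ and $A_\Q = \M_{\Qmap}$, takes the free category $\free\Q$, and quotients by the path relation induced by $\EqPath$; being a quotient of a free category, it is automatically a category, and the real work is the surjectivity of the comparison map, i.e.\ showing that every path of length $k\neq 1$ is $\EqPath$-related to a length-one element ($\IdE$ for $k=0$, $\CompE$ for $k=2$, $\EqPathConcat$ plus induction for $k>2$). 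You instead define $\Hom_\Cat(A,B)$ directly from sort $\Qmap[]$ and define composition via $\Comp$, which makes surjectivity nearly free but shifts the burden to well-definedness, associativity and unitality of composition, which you only sketch. Both decompositions can be made to work.

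There is, however, one genuine gap: you never invoke $\ComEq$ in the second direction, and without it a key step fails. You take $\Hom_\Cat(A,B)$ to be a set of elements of $\M_{\Qmap[]}$ \emph{quotiented by} $\EqPath$, and then claim that injectivity and surjectivity of the comparison between $\Cat_Q$ and $\M_Q$ ``come from $\PushoutE$ and $\RestrComp$''. But the gluing of \cref{lem:pushout} assembles an element of $\M_Q$ from actual elements of $\M_{\Qmap[]}$, not from $\EqPath$-classes: if $\EqPath$ were a nontrivial equivalence on sort $\Qmap[]$, two distinct elements of $\M_Q$ could have all their arrow-restrictions $\EqPath$-related without being equal, so $\M_Q \to \Cat_Q$ would fail to be injective, and the inverse map $\Cat_Q \to \M_Q$ would depend on the chosen representatives. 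What rescues the construction is exactly $\ComEq$: combined with the definition of $\EqPath_{\Qmap,\Qmap}$ through the pushout over the quiver with two parallel arrows, it yields $\EqPath(x,y) \to x \eqd y$ on sort $\Qmap[]$, so the quotient you form is in fact trivial and the uniqueness clause of \cref{lem:pushout} applies. This is precisely how the paper proves injectivity of its comparison map $\Psi$; you need to make that step explicit, and the same observation is what upgrades ``the composite is unique up to $\EqPath$'' to genuine single-valuedness of composition on elements rather than on classes.
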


\begin{proof}
We only prove the theorem for $\T_\cat$, as the proof for $\cyc\T_\cat$ is similar.

If $\Cat$ is a small category, a routine check shows that the associated model $\Cat$ of $\Sigma$ verifies the theory $\T_\cat$. For instance,
\begin{itemize}
  \item the formulas $\PushoutE$ follows from the remark in \cref{def:pushout},
  \item $\IdE$ and $\CompE$ come from the existence of the identity map and the existence of the composition respectively,
  \item for two path-quivers $P_1$, $P_2$ and two diagrams $D_1$ and $D_2$ over them, $\EqPath_{P_1,P_2}(D_1,D_2)$ is the relation $\comp(D_1) = \comp(D_2)$, which is a path relation by \cref{sec:quotient_category},
  \item $\ComEq$ and $\PathCom_Q$ follow from \cref{eqn:diagram_commutativity}.
\end{itemize}

\smallskip
Let us prove the other direction. Let $Q$ be an acyclic quiver. By an abuse of the notations, if $v \in V_Q$, \resp $a \in A_Q$, we also denote by $v$ the corresponding embedding $v\colon \Qdot[] \hookrightarrow Q$, \resp $a\colon \Qmap[] \hookrightarrow Q$. Here is a crucial lemma for the proof of the theorem.

\begin{lemma}[General pushout] \label{lem:pushout}
  Let $\M$ be a model of $\T_\cat$. Let $Q$ be an acyclic quiver, $\beta_V\colon V_Q \to \M_{\Qdot}$ and $\beta_A\colon A_Q \to \M_{\Qmap}$ be two maps. Then the following statements are equivalent.
  \begin{enumerate}
    \item \label{lem:pushout:1} There exists an element $\beta$ of $\M_Q$ such that $\restr_v^\M(\beta) = \beta_V(v)$ and $\restr_a^\M(\beta) = \beta_A(a)$ for any $v \in V_Q$ and any $a \in A_Q$,
    \item \label{lem:pushout:2} For any $a \in A_Q$, $\restr^\M_{\dotimap}(\beta_A(a)) = \beta_V(s_Q(a))$ and $\restr^\M_{\dotiomap}(\beta_A(a)) = \beta_V(t_Q(a))$.
  \end{enumerate}
  Moreover, when both statements hold, the element $\beta$ is unique.
\end{lemma}

\begin{proof}
  The fact that the first point induces the second one follows directly from $\RestrComp$. Hence we focus on the other direction and on the uniqueness.

  \smallskip
  We proceed by induction on the structure of $Q$. If $Q=\emptyset$ is empty, the second point holds trivially. The first point and the uniqueness follows from $\EmptyEU$.

  \smallskip
  Assume first that the lemma holds for some quiver $Q_1$ with no arrow. Let $Q$ be the quiver $Q_1$ with an extra vertex $v_0$. Let $\beta_V$ and $\beta_A$ be two maps as in the statement of the lemma, and $\beta_{1,V}$ the restriction of $\beta_V$ to $Q_{1,V}$. We have the following pushout configuration:
  \begin{center} \begin{tikzcd}[sep=scriptsize]
    \emptyset \rar[hook]{m_1} \dar[hook]{m_2} & Q_1 \dar[hook]{m'_1} \\
    \Qdot[] \rar[hook, "m'_2 = v_0"'] & Q
  \end{tikzcd} \end{center}

  Point~\ref{lem:pushout:2} holds trivially. Let us prove the existence and the uniqueness of $\beta$ satisfying the property of Point~\ref{lem:pushout:1}. By induction, we get a unique $\beta_1 \in \M_{Q_1}$ compatible with $\beta_{1,V}$ and $\beta_A$. Set $\beta_2 \coloneqq \beta_V(v_0)$.  By $\EmptyEU$, $\restr^\M_{m_1}(\beta_1) = \restr^\M_{m_2}(\beta_2)$. Hence we can apply $\PushoutE_{m_1, m_2, m'_1, m'_2}$ to get a unique element $\beta \in \M_Q$ such that $\Pushout^\M(\beta_1,\beta_2,\beta)$. Using $\RestrComp$ and the induction hypothesis, we can check that for $\beta' \in \M_Q$, there is an equivalence between $\Pushout^\M(\beta_1,\beta_2,\beta')$ and $\restr_v^\M(\beta') = \beta_V(v)$ for any $v \in V_Q$. This concludes the induction.

  \smallskip
  Assume more generally that the lemma holds for some acyclic quiver $Q_1$. Let $Q$ be an acyclic quiver obtained from $Q_1$ by adding one arrow $a_0$. Let $\beta_V$ and $\beta_A$ be two maps as before and $\beta_{1,A}$ the restriction of $\beta_A$ to $A_{Q_1}$. Let $m_1\colon \Qtwodots[] \hookrightarrow Q_1$ mapping the first point to $s_Q(a_0)$ and the second point to $t_Q(a_0)$. Let $m_2\coloneqq\sot_{\Qmap}=\twodotsimap[]$. Once again, we get a pushout configuration:
  \begin{center} \begin{tikzcd}[sep=scriptsize]
    \Qtwodots[] \rar[hook]{m_1} \dar[hook]{m_2} & Q_1 \dar[hook]{m'_1} \\
    {\tikz[baseline={([yshift=-.6ex]current bounding box.center)}, inner sep=1pt, -latex]{
      \node (0) at (0,0)  {\tikz{\fill(0,0) circle (.75pt);}};
      \node (1) at (.8,0) {\tikz{\fill(0,0) circle (.75pt);}};
      \draw (0) -- (1);
    }}
  \rar[hook, "m'_2 \coloneqq a_0"'] & Q
  \end{tikzcd} \end{center}

  Assume that Point~\ref{lem:pushout:2} holds. By induction, we get a unique element $\beta_1 \in \M_{Q_1}$ compatible with $\beta_V$ and $\beta_{1,A}$. Set $\beta_2 \coloneqq \beta_A(a_0)$. We have already proven the lemma for the quiver $\Qtwodots[]$. Hence we deduce that
  \[ \restr^\M_{m_1}(\beta_1) = \restr^\M_{m_2}(\beta_2). \]
  We can apply $\PushoutE$ as before to get Point~\ref{lem:pushout:1} as well as the uniqueness part. This concludes the proof of the lemma.
\end{proof}

We now continue the proof of \cref{thm:category_theory}. Let $\M$ be a model of $\T_\cat$. We define the general quiver $\Q$ associated to $\M$ by
\[ V_\Q \coloneqq \M_{\Qdot}, \quad A_\Q \coloneqq \M_{\Qmap}, \quad s_\Q \coloneqq \restr_{\dotimap}^\M\text{ and} \quad t_\Q \coloneqq \restr_{\dotiomap}^\M. \]
Set $\~\Cat \coloneqq \free\Q$. Thanks to \cref{lem:pushout}, to each acyclic quiver $Q$ and to each element $\beta \in \M_Q$, we can associate a unique diagram $\~\Psi(\beta)$ in $\~C$ verifying:
\begin{itemize}
  \item for any $v\in V_Q$, $\~\Psi(\beta)(v) = \restr_v^\M(\beta)$.
  \item for any $a\in A_Q$, $\~\Psi(\beta)(a)$ is the path of length one with arrow $\restr_a^\M(\beta)$.
\end{itemize}
The image of $\~\Psi$ is exactly the set of diagrams whose morphisms are paths of lengths one.

There is another important map. Let $A$ and $B$ be two objects of $\~\Cat$, and let $p \in \Hom(A,B)$. Recall that $p$ is just a path from $A$ to $B$ in $Q$. Let $k$ be the length of $p$. By \cref{lem:pushout}, there exists a unique element $\Theta(p) \in \M_{\PQ_k}$ such that
\begin{itemize}
  \item for each $v \in V_{\PQ_k}$, $\restr_v^\M(\Theta(p)) = p(v)$,
  \item for each $a \in A_{\PQ_k}$, $\restr_a^\M(\Theta(p)) = p(a)$.
\end{itemize}
Relation $\EqPath^\M$ thus induces a relation $r$ on morphisms of $\~\Cat$. Moreover, $\EqPathRefl$, $\EqPathSym$, $\EqPathTrans$ and $\EqPathConcat$, together with \cref{lem:pushout}, make $r$ a category relation. We can hence define the category $\Cat \coloneqq \~\Cat/r$. Now $\~\Psi$ induces a map $\Psi\colon \M_Q \to \Cat_Q$ for any quiver $Q$, and we claim that $\Psi$ induces a model isomorphism between $\M$ and $\Cat$.

Let $Q$ be any acyclic quiver. We first prove that $\Psi$ is injective. Let $\beta,\gamma \in \M_Q$ such that $\Psi(\beta) = \Psi(\gamma)$. For any vertex $v\in V_Q$, $\~\Psi(\beta)(v) = \~\Psi(\gamma)(v)$, \ie, $\restr_v^\M(\beta) = \restr_v^\M(\gamma)$. Let $a$ be an arrow of $Q$. Then we have the relation $\~\Psi(\beta)(a) \rel{r} \~\Psi(\gamma)(a)$. By  definition of $\EqPath_{\Qmap,\Qmap}$, we validate the premise of $\ComEq$, and thus the equality $\~\Psi(\beta)(a) = \~\Psi(\gamma)(a)$, \ie, $\restr_a^\M(\beta) = \restr_a^\M(\gamma)$. By the uniqueness part of \cref{lem:pushout}, we get $\beta = \gamma$.

\smallskip
We now consider the surjectivity of $\Psi$. It suffices to prove that any morphism $p \in \Hom_{\~\Cat}(A,B)$ is in relation via $r$ to a path of length one. Indeed, in such a case, for any diagram $\~D$ in $\~\Cat$ over $Q$, one can find another diagram $\~D'$ over $Q$ whose morphisms are path of size one and such that any morphism of $\~D$ is in relation with the corresponding morphism of $\~D'$. Hence the induced diagrams in $\Cat$ are equal. Moreover, $\~D'$ is in the image of $\~\Psi$, and we would get the surjectivity.

Let $P$ be a path-quiver and let $\beta \in \M_P$. We have to find an element $\gamma \in \M_{\Qmap}$ such that $\EqPath^\M_{P,\Qmap}(\beta,\gamma)$. If $P$ has length one, this is trivial. If $P$ has length zero, then by $\IdE$, there exists $\gamma$ such that $\Id^\M(\beta,\gamma)$. Moreover, by the definition $\EqPath_{\Qdot,\Qmap}$ and using the reflexivity of $\EqPath$, we get $\EqPath^\M_{\Qdot,\Qmap}(\beta,\gamma)$. If $P$ has length two, then by $\CompE$, we can find an element $\gamma$ such that
\[ \Comp(\restr^\M_{\mapibimap}(\beta),\restr^\M_{\mapiobimap}(\beta),\gamma). \]
The commutativity of the triangle induces $\EqPath^\M_{\Qbimap,\Qmap}(\beta,\gamma)$.

{
\renewcommand{\linebox}{\path[use as bounding box] (-.05,-1ex)--(.55,-1ex)}
For $P$ with length $k>2$, we work by induction. Using $\EqPathTrans$, it suffices to find $\gamma$ over $\PQ_{k-1}$ such that $\EqPath^\M_{P,\PQ_{k-1}}(\beta,\gamma)$. To do so, we see $P$ as the pushout of $\PQ_2$ and $\PQ_{k-2}$ along $m'_1 = \spd\colon \PQ_2 \to P$ and $m'_2 = \tpdi\colon \PQ_{k-2} \to P$. By the case $k=2$, we can find $\gamma_1 \in \M_{\Qmap}$ such that $\EqPath(\restr^\M_{m'_1}(\beta),\gamma_1)$. We set $\gamma_2 = \restr^\M_{m'_2}(\beta)$. In particular we have $\EqPath(\gamma_2,\gamma_2)$. By $\EqPathConcat$, we get $\EqPath(\beta, \gamma)$ where $\gamma \in \M_{\PQ_{k-1}}$ is such that $\Pushout_{m'_1,m'_2}(\gamma_1,\gamma_2,\gamma)$, and the result follows.
}

\smallskip
We have shown that $\Psi$ induces a bijection between the corresponding domains. In order to conclude the proof, it remains to prove that $\Psi$ commutes with $\restr$ and with $\com$. The commutativity with $\restr$ follows from the definition of $\Psi$ and the formulas $\RestrComp$.

The compatibility with the predicate $\com$ can be reduced to the compatibility of $\EqPath$ via the formula $\PathCom$. Let $P_1$ and $P_2$ be two path-quivers and $\beta_1 \in \M_{P_1}$ and $\beta_2 \in \M_{P_2}$. These elements correspond to paths $p_1$ and $p_2$ in $\Q$. Using the definitions of the different elements, we get the following chain of equivalences:
\begin{align*}
  &\EqPath^\Cat(\Psi(\beta_1), \Psi(\beta_2)) \Leftrightarrow \comp_\Cat(\Psi(\beta_1)) = \comp_\Cat(\Psi(\beta_2)) \\
  &\hspace{3.1cm} \Leftrightarrow \comp_{\~\Cat}(\~\Psi(\beta_1)) \rel{r} \comp_{\~\Cat}(\~\Psi(\beta_2)) \Leftrightarrow p_1 \rel{r} p_2 \Leftrightarrow \EqPath(\beta_1, \beta_2).
\end{align*}
This concludes the proof of the theorem.
\end{proof}

\section{Duality}\label{sec:dual}
Signatures of \cref{def:sig} are tailored to enforce a built-in, therefore easy to prove, duality principle, which we now make precise. Recall from \cref{def:quiv} that duality is an involution on quivers, and also on acyclic quivers. We define the dual of a formula over $\cyc\Sigma$ as follows:
\begin{itemize}
  \item if $m\colon Q \to Q'$ is a morphism, then $m^\dual\colon Q^\dual \to Q'^\dual$ is defined by $m^\dual_V=m_V$ and $m^\dual_A=m_A$.
  \item if $x=(\x,Q)$ is a variable in $X \times \cyc S$ then $x^\dual \coloneqq (\x,Q^\dual)$,
  \item $(\restr_m(x))^\dual \coloneqq \restr_{m^\dual}(x^\dual)$ and $(\com_Q(x))^\dual \coloneqq \com_{Q^\dual}(x^\dual)$,
  \item $(x \eqd y)^\dual \coloneqq x^\dual \eqd y^\dual$,
  \item $(\forall_Q x, \phi)^\dual \coloneqq \forall_{Q^\dual}x^\dual,\phi^\dual$ and $(\exists_Q x, \phi)^\dual \coloneqq \exists_{Q^\dual}x^\dual, \phi^\dual$,
  \item  $(\phi \wedge \psi)^\dual \coloneqq \phi^\dual \wedge \psi^\dual$, etc.
\end{itemize}
For $Y$ a set of variables, $Y^\star$ denotes
$\{ x^\star \mid x \in Y\}$. For any theory $\T$, $\T^\dual$ denotes $\{\phi^\dual \mid \phi\in\T\}$.

For $\M$ an interpretation of $\cyc\Sigma$ over a set of variables $Y$, we define its dual model $\M^\dual$ as:
\begin{itemize}
  \item $\M^\dual_Q \coloneqq \M_{Q^\dual}$,
  \item for $x \in Y^\dual$, $x^{\M^\dual} \coloneqq (x^\dual)^\M$.
  \item $\restr^{\M^\dual}_m \coloneqq \restr^{\M}_{m^\dual}$ and $\com^{\M^\dual}_Q \coloneqq \com^\M_{Q^\dual}$,
\end{itemize}

The duality involution also restricts to formulas, theories and models over $\Sigma$.

\begin{example} \label{ex:dual_category}
  If $\Cat$ is a small category, then the dual interpretation $\Cat^\dual$ is isomorphic to the model of the dual category, both with respect to $\cyc\Sigma$ and to $\Sigma$.
\end{example}

\begin{theorem}[Duality theorem]
  Let $\phi$ be a formula with free variables included in $Y \subseteq X \times S$, \resp in $Y \subseteq X \times \cyc S$, and let $\M$ be a model of\/ $\Sigma$, \resp of $\cyc\Sigma$. Then
  \[ \M \models \phi \quad \Longleftrightarrow \quad \M^\dual \models \phi^\dual. \]
\end{theorem}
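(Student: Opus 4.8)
The plan is a structural induction on the formula $\phi$, with the variable set $Y$ and the interpretation $\M$ over $Y$ left universally quantified at the outer level, so that the quantifier case may invoke the induction hypothesis for a strictly larger variable set. Two facts will be used throughout. First, duality is an involution at every syntactic level: $(Q^\dual)^\dual = Q$, $(m^\dual)^\dual = m$ for a quiver morphism $m$, $(x^\dual)^\dual = x$ for a variable $x$, $(t^\dual)^\dual = t$ for a term, and $(\phi^\dual)^\dual = \phi$. Second, by the very definition of $\M^\dual$, passing to $\M^\dual$ is a pure reindexing rather than a modification of data: $\M^\dual_{Q^\dual}$ \emph{is} the set $\M_Q$, and similarly $\com^{\M^\dual}_{Q^\dual} = \com^\M_Q$ and $\restr^{\M^\dual}_{m^\dual} = \restr^\M_m$. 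We silently use the identification $\M^\dual_{Q^\dual} = \M_Q$.

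The first step is the analogue of the statement for terms: for every term $t$ of sort $Q$ with variables in $Y$, the element $(t^\dual)^{\M^\dual} \in \M^\dual_{Q^\dual} = \M_Q$ equals $t^\M$. This is an immediate induction on $t$. If $t = x$ is a variable of sort $Q$, then $(x^\dual)^{\M^\dual} = \bigl((x^\dual)^\dual\bigr)^\M = x^\M$ by the definition of $\M^\dual$ on variables. If $t = \restr_m(s)$ for a quiver morphism $m\colon Q'' \to Q$ and a term $s$ of sort $Q$ (so $t$ has sort $Q''$), then $t^\dual = \restr_{m^\dual}(s^\dual)$ and $(t^\dual)^{\M^\dual} = \restr^{\M^\dual}_{m^\dual}\bigl((s^\dual)^{\M^\dual}\bigr) = \restr^\M_{(m^\dual)^\dual}(s^\M) = \restr^\M_m(s^\M) = t^\M$, using the induction hypothesis and $(m^\dual)^\dual = m$.

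Next come the two kinds of atoms. For an equality $s \eqd t$, duality commutes with $\eqd$, and $\M^\dual \models (s \eqd t)^\dual$ iff $(s^\dual)^{\M^\dual} = (t^\dual)^{\M^\dual}$ iff $s^\M = t^\M$ (by the term step) iff $\M \models s \eqd t$. For a commutativity atom $\com_Q(t)$ with $t$ of sort $Q$, we have $(\com_Q(t))^\dual = \com_{Q^\dual}(t^\dual)$, and $\M^\dual \models \com_{Q^\dual}(t^\dual)$ iff $(t^\dual)^{\M^\dual} \in \com^{\M^\dual}_{Q^\dual} = \com^\M_Q$, which by the term step is $t^\M \in \com^\M_Q$, \ie\ $\M \models \com_Q(t)$.

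Finally, the inductive steps on the structure of $\phi$. The propositional connectives are immediate from the induction hypothesis together with the clauses $(\phi\wedge\psi)^\dual = \phi^\dual\wedge\psi^\dual$, $(\neg\phi)^\dual = \neg\phi^\dual$, etc. For a quantifier, the one point to check is that extending an interpretation commutes with dualization: for $a \in \M_Q = \M^\dual_{Q^\dual}$, the interpretation $(\M[x\mapsto a])^\dual$ over $Y^\dual \cup \{x^\dual\}$ coincides with $\M^\dual[x^\dual\mapsto a]$ --- both send $x^\dual$ to $a$, agree with $\M^\dual$ on $Y^\dual$, and carry the same interpretations of the function and predicate symbols. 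Granting this, $\M^\dual \models (\forall_Q x,\psi)^\dual = \forall_{Q^\dual} x^\dual,\psi^\dual$ iff for every $a \in \M^\dual_{Q^\dual}$ one has $\M^\dual[x^\dual\mapsto a] \models \psi^\dual$, iff (by the induction hypothesis for $\psi$ with variable set $Y \cup \{x\}$ and interpretation $\M[x\mapsto a]$) for every $a \in \M_Q$ one has $\M[x\mapsto a] \models \psi$, iff $\M \models \forall_Q x,\psi$; the existential case is identical. The statement for $\Sigma$ follows by exactly the same induction, since the sorts of $\Sigma$ are the acyclic quivers and duality restricts to an involution on them. There is no real obstacle here: the argument is a routine formula induction, and the only thing requiring care is the bookkeeping of the involutions $(\cdot)^\dual$ and the observation that $\M \mapsto \M^\dual$ is a genuine reindexing, so that plain equalities --- not isomorphisms --- suffice at each step.
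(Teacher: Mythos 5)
Your proof is correct. It is worth noting, though, that it takes a genuinely different (and more explicit) route than the paper, whose entire proof is the one-line remark that ``a formula is provable if and only if its dual is provable''. That remark points at a proof-theoretic argument --- an induction on derivations, using the fact that every axiom and inference rule of the many-sorted calculus is self-dual --- which by itself would only yield equivalence of provability or of validity, not the model-by-model equivalence $\M \models \phi \Leftrightarrow \M^\dual \models \phi^\dual$ actually asserted in the statement. Your semantic induction on the structure of $\phi$ proves exactly the stated claim, and it correctly isolates the two facts that make everything go through: $(\cdot)^\dual$ is an involution at every syntactic level, and $\M \mapsto \M^\dual$ is a pure reindexing of the same underlying data, so that the term lemma $(t^\dual)^{\M^\dual} = t^\M$ holds as a literal equality. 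You also identify the one step that genuinely needs checking, namely that extending an interpretation commutes with dualization, $(\M[x\mapsto a])^\dual = \M^\dual[x^\dual\mapsto a]$, which is what closes the quantifier case; and your observation that duality restricts to an involution on acyclic quivers is what makes the same induction work verbatim for $\Sigma$. In short, your argument supplies the complete semantic proof that the paper leaves implicit, and is the version one would want in a formalization.
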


\begin{proof}
  More generally, a formula is provable if and only if its dual is provable.
\end{proof}

\begin{remark} The duality principle has some useful direct consequences:
\begin{itemize}
  \item If $\phi$ is a valid, \resp satisfiable, \resp unsatisfiable, formula, so is $\phi^\dual$.
  \item Let $\T$ be theory such that any model of $\T$ verifies $\T^\dual$. If $\phi$ is a valid, \resp satisfiable, \resp unsatisfiable, formula among models of $\T$, so is $\phi^\dual$.
  \item We have the following reciprocal. Let $\T$ be a theory such that any model $\M$ of $\T$ verifies that $\M^\dual \models \T$, then every model of $\T$ verifies $\T^\dual$.
\end{itemize}
\end{remark}

The following fact follows directly from this last point, \cref{thm:category_theory} and \cref{ex:dual_category}.

\begin{proposition} \label{prop:Tcat_autodual}
  Models of\/ $\T_\cat$ verify $\T_\cat^\dual$, and models of\/ $\cyc\T_\cat$ verify $\cyc\T_\cat^\dual$.
\end{proposition}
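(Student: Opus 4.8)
The plan is to combine the structure theorem for models (\cref{thm:category_theory}) with the behaviour of categorical interpretations under duality (\cref{ex:dual_category}), and then invoke the third bullet of the remark following the duality theorem. I will use two elementary facts about many-sorted models: first, that satisfaction of a sentence is invariant under model isomorphism, so that $\M \cong \M'$ implies $\M \models \phi$ if and only if $\M' \models \phi$; and second, that dualization is compatible with isomorphism, i.e.\ any isomorphism $\M \to \M'$ induces an isomorphism $\M^\dual \to \M'^\dual$. The latter holds because $\M^\dual$ and $\M'^\dual$ carry the same underlying domains, interpretations of $\restr$, and interpretations of $\com$ as $\M$ and $\M'$, merely reindexed along the involution $Q \mapsto Q^\dual$ on sorts; so a family of bijections $f_Q$ witnessing $\M \cong \M'$ yields the family $f_{Q^\dual}$ witnessing $\M^\dual \cong \M'^\dual$. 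Both facts are verified by a routine induction on formulas, resp.\ by unwinding the definitions of \cref{sec:dual}.

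Now fix an arbitrary model $\M$ of $\T_\cat$. By the second half of \cref{thm:category_theory}, there is a small category $\Cat$ whose associated categorical interpretation, also denoted $\Cat$, satisfies $\M \cong \Cat$. By the two facts above, $\M^\dual \cong \Cat^\dual$. By \cref{ex:dual_category}, $\Cat^\dual$ is isomorphic to the categorical interpretation of the opposite category $\Cat^{\mathrm{op}}$, which is again a small category. Finally, by the first half of \cref{thm:category_theory}, the categorical interpretation of $\Cat^{\mathrm{op}}$ is a model of $\T_\cat$. Chaining these isomorphisms and using isomorphism-invariance of satisfaction, we conclude $\M^\dual \models \T_\cat$.

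We have thus shown that \emph{every} model $\M$ of $\T_\cat$ satisfies $\M^\dual \models \T_\cat$. By the third bullet of the remark following the duality theorem, this implies that every model of $\T_\cat$ verifies $\T_\cat^\dual$, which is the first claim. The argument for $\cyc\T_\cat$ is identical, replacing $\Sigma$ by $\cyc\Sigma$ throughout: \cref{thm:category_theory} and \cref{ex:dual_category} were stated with their ``resp.'' clauses precisely for this, and the involution $Q \mapsto Q^\dual$ preserves the class of finite quivers just as it preserves the class of acyclic finite quivers. The only mild subtlety in the whole proof is the bookkeeping of the two auxiliary facts about isomorphisms of many-sorted models; no new computation inside the theory $\T_\cat$ is needed, since all the category-theoretic content is already packaged into \cref{thm:category_theory} and \cref{ex:dual_category}.
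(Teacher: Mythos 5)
Your proof is correct and follows exactly the route the paper intends: the paper justifies the proposition by citing the third bullet of the remark, \cref{thm:category_theory}, and \cref{ex:dual_category}, which is precisely the chain of isomorphisms you spell out. Your version merely makes explicit the two routine facts about isomorphism-invariance of satisfaction and compatibility of dualization with isomorphism, which the paper leaves implicit.
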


\section{A theory for diagrams over abelian categories}\label{sec:abel}

We now introduce a theory whose models are diagrams in small abelian categories. We rely on the set of axioms given by Freyd in \cite{Fre64}. This reference is particularly well-suited for our purpose. Indeed, the author does not impose the homomorphisms between any two objects of an abelian category to form a group, but this fact rather follows from the axioms.

\smallskip
Let us first introduce common notions of category theory in our logic. Here is a formula of arity $\Qmap[]$ which corresponds to monicity of a map in a category.
\begin{align*}
  \Mono(x):\qquad& \forall_{\Qmono} y,\quad \restr_{\mapioomono}(y) \eqd x \\
  &\qquad\wedge \ \com(\restr_\compimono(y)) \ \wedge \ \com(\restr_\compiomono(y)) \\
  &\qquad\quad\to \ \com(\restr_\cobimapimono(y)).
\end{align*}
The dual formula is called $\Epi$.

\smallskip
Let $Q$ be a quiver. We define the \emph{cone of $Q$} as the quiver
\[ \cone(Q) \coloneqq (V_Q \sqcup \{v_0\}$, $A_Q \sqcup \{a_v \mid v \in V_Q\}, s_{\cone(Q)}, t_{\cone(Q)}), \]
where $s_{\cone(Q)}$ and $t_{\cone(Q)}$ are extensions of $s_Q$ and $t_Q$ by $s_Q(a_v) = v_0$ and $t_Q(a_v) = v$. We define by $i_Q\colon Q \hookrightarrow \cone(Q)$ the corresponding embedding. If $m\colon Q \to Q'$ is a morphism of quivers, we get a canonical morphism $\cone(m)\colon \cone(Q) \to \cone(Q')$.

Abusing the notations, if $a$ is an arrow of $Q$, we also denote by $a\colon \Qmap[] \hookrightarrow Q$ the corresponding morphism. We then introduce the usual notion of cones of diagrams by the following formula of arity $Q \times \cone(Q)$.
\[ \Cone_Q(x, y)\colon\qquad \restr_{i_Q}(y) \eqd x \quad \wedge \quad \bigwedge_{a \in A_Q} \com(\restr_{\cone(a)}(y)). \]
Here is the notion of limit.
\begin{align*}
  \Limit_Q(x, y)\colon\qquad& \Cone_Q(x, y) \quad \wedge\\
  &\qquad \forall z,\quad \Cone_Q(x, z) \ \to \ \exists! w, \ \Cone(y, w) \wedge \restr_{\cone(i_Q)}(w) \eqd z.
\end{align*}
We also introduce the dual notion $\Colimit_Q \coloneqq \Limit_{Q^\dual}^\dual$.

The introduction of monos, epis, limits and colimits allows to state the axioms of abelian category given by Freyd~\cite{Fre64}. First, we define zero objects and kernels as follows.
\begin{align*}
  \Zero(x)\colon\qquad& \forall y,\quad \Limit_\emptyset(y,x) \ \wedge \ \Colimit_\emptyset(y,x), \\
  \Ker(x,y)\colon\qquad& \exists_{\Qker} z,
    \qquad \restr_{\mapiooker}(z) \eqd x \quad \wedge \quad  \restr_{\mapiker}(z) \eqd y \\
    & \qquad \wedge \quad \Zero(\restr_{\dotiiker}(z)) \quad \wedge \quad \Limit(\restr_{\veeiker}(z), z).
\end{align*}
We also define $\Coker \coloneqq \Ker^\dual$.

\medskip

We define the category $\T_\ab$ as the extension of $\T_\cat$ by the following formulas.
\begin{align*}
  \ZeroE\colon&\qquad \exists_{\Qdot} x, \quad \Zero(x), \\
  \ProductE\colon&\qquad \forall x, \exists y, \quad \Limit_{\Qtwodots}(x,y), \\
  \CoproductE\colon&\qquad \ProductE^\dual, \\
  \KerE\colon&\qquad \forall x, \exists y, \quad \Ker(x,y), \\
  \CokerE\colon&\qquad \KerE^\dual, \\
  \MonoNormal\colon&\qquad \forall x, \quad \Mono(x) \ \to \  \exists y, \ \Ker(y,x), \\
  \EpiNormal\colon&\qquad \MonoNormal^\dual.
\end{align*}

\smallskip
The following theorem states that $\T_\ab$ is a theory for diagrams over abelian categories.

\begin{theorem} \label{thm:abelian_category_theory}
  The categorical interpretation induced by any small abelian category is a model of\/ $\T_\ab$. Conversely, any model of\/ $\T_\ab$ is isomorphic to the categorical interpretation associated to some small abelian category.
\end{theorem}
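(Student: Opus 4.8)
The plan is to deduce \cref{thm:abelian_category_theory} from \cref{thm:category_theory}, so that the only genuinely new ingredient is a \emph{semantic adequacy lemma} identifying, inside any categorical interpretation, the first-order predicates $\Mono$, $\Epi$, $\Cone_Q$, $\Limit_Q$, $\Colimit_Q$, $\Zero$, $\Ker$ and $\Coker$ with the corresponding category-theoretic notions. Precisely, for a small category $\Cat$ viewed as an interpretation via \cref{def:catinterp}, I would establish: a diagram over $\Qmap[]$ satisfies $\Mono$ (\resp $\Epi$) iff the morphism it names is mono (\resp epi); a pair $(x,y)$ over $Q \times \cone(Q)$ satisfies $\Cone_Q$ iff $y$ extends $x$ along $i_Q$ and its apex together with the legs $\cone(a)$, $a \in A_Q$, is a cone on $x$, and such a pair satisfies $\Limit_Q$ iff this cone is limiting; dually for $\Colimit_Q$; an object-diagram satisfies $\Zero$ iff it is a zero object; and $(x,y)$ satisfies $\Ker$ (\resp $\Coker$) iff $y$ is a kernel (\resp cokernel) of $x$, a kernel being understood as the limit of the cospan $\mathrm{dom}(x)\to\mathrm{cod}(x)\leftarrow 0$ supplied by the $\Zero(\cdot)$ conjunct.

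To prove the adequacy lemma one simply unwinds \cref{def:catinterp}: since $\restr_m^\Cat$ is pullback of diagrams and $\com_Q^\Cat$ is commutativity, each clause $\com(\restr_{\cone(a)}(y))$ becomes exactly the commutation of one triangle of a cone, the $\exists! w$ clause in $\Limit_Q$ becomes exactly the universal property, and the equalities $\restr_m(\cdot)\eqd\cdot$ just select the intended sub-diagrams. One preliminary check is that $\cone(Q)$ is acyclic whenever $Q$ is (the fresh vertex $v_0$ is a source and receives no arrow), so that all these formulas indeed belong to $\Sigma$ and the schemata make sense in $\T_\ab$. I would also use duality to halve the work: since $\Epi=\Mono^\dual$, $\Coker=\Ker^\dual$ and $\Colimit_Q=\Limit_{Q^\dual}^\dual$, it suffices by the duality theorem and \cref{ex:dual_category} to treat the primal predicates and transfer the co-statements formally.

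Granting the adequacy lemma, the forward direction is immediate. If $\A$ is a small abelian category, then by \cref{thm:category_theory} its categorical interpretation already satisfies $\T_\cat$; by the adequacy lemma the seven remaining axioms of $\T_\ab$ assert exactly that $\A$ has a zero object ($\ZeroE$), binary products ($\ProductE$) and coproducts ($\CoproductE$), kernels ($\KerE$) and cokernels ($\CokerE$), that every monomorphism is a kernel ($\MonoNormal$) and that every epimorphism is a cokernel ($\EpiNormal$) --- precisely Freyd's axiom list, satisfied by $\A$. Hence $\A\models\T_\ab$. Here again $\CoproductE=\ProductE^\dual$, $\CokerE=\KerE^\dual$ and $\EpiNormal=\MonoNormal^\dual$, so reasoning as for \cref{prop:Tcat_autodual} spares checking the dual halves.

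For the converse, let $\M\models\T_\ab$. Since $\M\models\T_\cat$, \cref{thm:category_theory} provides a small category $\Cat$ and a model isomorphism $\M\simeq\Cat$; as a model isomorphism preserves the truth of every formula, $\Cat\models\T_\ab$ as well, and the adequacy lemma reads this off as: $\Cat$ has a zero object, binary products and coproducts, all kernels and cokernels, and normal monos and epis. By Freyd's characterization $\Cat$ is abelian, so $\M$ is isomorphic to the categorical interpretation of the small abelian category $\Cat$. The main obstacle is the adequacy lemma, and within it the $\Limit_Q$ and especially the $\Ker$ clauses: one must check that the cone/limit encoding via $\cone(Q)$ and the triangles $\cone(a)$ captures the universal property exactly --- no more and no less --- and that the single auxiliary diagram over $\Qker$ used in $\Ker$ correctly packages the morphism, its candidate kernel, the auxiliary zero object, and the limiting property, with each $\restr_m$ extracting the sub-diagram it is meant to. Once this bookkeeping is pinned down, everything else is routine.
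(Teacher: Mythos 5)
Your proposal is correct and follows exactly the route the paper takes: the paper's own proof is the one-liner ``this follows from \cref{thm:category_theory} and from Freyd, Chapter~2,'' which is precisely your decomposition into (i) the model correspondence for $\T_\cat$ and (ii) the identification of the extra axioms with Freyd's list via the semantic reading of $\Mono$, $\Limit_Q$, $\Zero$, $\Ker$ and their duals. Your ``adequacy lemma'' simply makes explicit the unwinding of \cref{def:catinterp} that the paper leaves implicit.
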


\begin{proof}
  This follows from \cref{thm:category_theory} and from \cite[Chapter 2]{Fre64}.
\end{proof}

\begin{proposition}
  The theory $\T_\ab$ implies its dual $\T_\ab^\dual$.
\end{proposition}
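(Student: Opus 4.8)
The strategy is to invoke the reciprocal stated in the remark following the Duality theorem: if every model $\M$ of a theory $\T$ satisfies $\M^\dual \models \T$, then every model of $\T$ verifies $\T^\dual$. So it suffices to check that the dual $\M^\dual$ of any model $\M$ of $\T_\ab$ is again a model of $\T_\ab$. This is the exact analogue, for abelian categories, of the argument giving \cref{prop:Tcat_autodual}, with \cref{thm:abelian_category_theory} now playing the role of \cref{thm:category_theory}.

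Concretely, let $\M \models \T_\ab$. By \cref{thm:abelian_category_theory} there is a small abelian category $\Cat$ whose categorical interpretation is isomorphic to $\M$. The opposite category $\Cat^{\mathrm{op}}$ is again small abelian --- this is the classical fact that Freyd's axioms are stable under reversing arrows --- so by the first half of \cref{thm:abelian_category_theory} the categorical interpretation of $\Cat^{\mathrm{op}}$ is a model of $\T_\ab$. Now \cref{ex:dual_category} identifies $\Cat^\dual$, up to isomorphism, with the categorical interpretation of $\Cat^{\mathrm{op}}$; hence $\Cat^\dual \models \T_\ab$. Finally, the construction $\M \mapsto \M^\dual$ is built pointwise from the involutions $Q \mapsto Q^\dual$ and $m \mapsto m^\dual$, so it takes isomorphic models to isomorphic models; from $\M \cong \Cat$ we thus obtain $\M^\dual \cong \Cat^\dual$, and therefore $\M^\dual \models \T_\ab$. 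The reciprocal quoted above then gives $\M \models \T_\ab^\dual$, i.e., $\T_\ab$ implies $\T_\ab^\dual$.

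It is instructive to note that the same conclusion can be reached syntactically. One has $\T_\ab^\dual = \T_\cat^\dual \cup \{\,\ZeroE^\dual, \ProductE^\dual, \CoproductE^\dual, \KerE^\dual, \CokerE^\dual, \MonoNormal^\dual, \EpiNormal^\dual\,\}$. Since $\dual$ is an involution on formulas and $\CoproductE$, $\CokerE$, $\EpiNormal$ were defined precisely as $\ProductE^\dual$, $\KerE^\dual$, $\MonoNormal^\dual$, these six axioms are merely permuted by dualization. The remaining one, $\ZeroE$, is self-dual up to logical equivalence: the empty quiver and the one-vertex quiver are self-dual, $\Colimit_\emptyset = \Limit_\emptyset^\dual$, and dualization commutes with variable substitution, so $\Zero(x)^\dual$ is $\Zero(x^\dual)$ up to commuting a conjunction and renaming a bound variable. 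Combined with \cref{prop:Tcat_autodual} (which yields $\T_\cat \models \T_\cat^\dual$, hence $\T_\ab \models \T_\cat^\dual$), this again gives $\T_\ab \models \T_\ab^\dual$. No genuine obstacle is anticipated; the only subtlety worth flagging is that $\ZeroE$ is self-dual only up to logical equivalence and not as a syntactic object --- a point that is entirely invisible in the first, model-theoretic, route.
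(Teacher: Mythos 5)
Your proposal is correct, and in fact your second, ``instructive'' paragraph \emph{is} the paper's proof: the authors argue syntactically that $\T_\cat$ implies $\T_\cat^\dual$ by \cref{prop:Tcat_autodual}, that $\ZeroE$ implies its dual, and that the remaining axioms were added together with their duals, so dualization merely permutes them. Your refinement that $\ZeroE$ is self-dual only up to logical equivalence (reordering a conjunction, using $\emptyset^\dual=\emptyset$ and $\Colimit_\emptyset=\Limit_\emptyset^\dual$) is a fair point the paper glosses over with ``clearly''. Your primary, model-theoretic route is a genuinely different argument: it mirrors how the paper itself derives \cref{prop:Tcat_autodual} (via the completeness-style \cref{thm:category_theory}, \cref{ex:dual_category}, and the reciprocal in the remark), but applied to $\T_\ab$ it must route through \cref{thm:abelian_category_theory} and the classical fact that the opposite of an abelian category is abelian, i.e., through Freyd's work --- a much heavier dependency than the two-line syntactic observation, though it has the virtue of being uniform in the theory and insensitive to how the axioms happen to be written down. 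Either route suffices; the paper chose the lighter one.
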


\begin{proof}
  The theory $\T_\cat$ implies its dual by \cref{prop:Tcat_autodual}. Moreover, $\ZeroE$ clearly implies its dual. Finally, for the other axioms we added, we also added their dual.
\end{proof}

\section{Decidability of the commerge problem}\label{sec:dec}

In this section, we use the notations of \cref{ssec:quotient_category}. Let $Q$ be a quiver, $k \in \N$ and, for each $i \in [k]$, let $Q_i$ be a quiver and $m_i\colon Q_i \to Q$ be a morphism. We define the following formula:
\[ \Commerge_{m_0,\dots,m_{k-1}}\colon\qquad \forall_Q x, \qquad \bigwedge_{i=0}^{k-1} \com(\restr_{m_i}(x)) \quad \to \quad \com(x). \]

\begin{definition}
  Notations as above, the \emph{acyclic, \resp cyclic, commerge problem for morphisms, \resp embeddings, $m_0, \dots, m_{k-1}$ and for a theory $\T$} is the problem of deciding the validity of\/ $\Commerge_{m_0,\dots,m_{k-1}}$ among models of\/ $\Sigma$, \resp $\cyc\Sigma$, verifying the theory $\T$.
\end{definition}

We recall that a \emph{thin category} is a category with at most one morphism between any pair of objects.  Let $\tot_{Q_i} = \BP_{\!Q_i}$ be the complete path relation on $Q_i$. Set $r_i \coloneqq m_{i\,*}(\tot_{Q_i})$ for $i \in [k]$. Recall that $(r_i)_{i \in [k]}$ is the smallest path relation containing the $r_i$ for all $i \in [k]$.

\begin{lemma} \label{lem:commerge_triviality}
  Notation as above, the formula $\Commerge_{m_0,\dots,m_{k-1}}$ is valid among model of $\T_\cat$, \resp $\cyc\T_\cat$, if and only if $\free{Q}/(r_i)_{i \in [k]}$ is a thin category.
\end{lemma}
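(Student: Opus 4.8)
The plan is to read the statement off \cref{thm:category_theory} together with the universal property of quotient categories, reducing everything to an elementary fact about when a functor out of $\free Q$ that collapses a prescribed path relation is forced to be commutative.

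By \cref{thm:category_theory}, every model of $\T_\cat$ (\resp $\cyc\T_\cat$) is isomorphic to a categorical interpretation, so $\Commerge_{m_0,\dots,m_{k-1}}$ is valid among these models if and only if the following holds for every small category $\Cat$: for every diagram $D$ over $Q$ in $\Cat$, that is, every functor $D\colon \free Q \to \Cat$, commutativity of all the pullbacks $m_i^*(D)$, $i\in[k]$, implies commutativity of $D$. I would then translate the two commutativity conditions into path-relation statements. Identifying a path of a quiver with the corresponding morphism of its free category, a diagram $D$ over $Q$ is commutative exactly when $D$ sends any two parallel morphisms of $\free Q$ to the same morphism, \ie\ when $D$ factors through the quotient functor $\free Q \to \free Q/\tot_Q$; note that $\free Q/\tot_Q$ is the thin category with object set $V_Q$. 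Likewise $m_i^*(D) = D\circ\Phi_{m_i}$ is commutative exactly when $D(\Phi_{m_i}(p)) = D(\Phi_{m_i}(q))$ for all $(p,q)\in\BP_{\!Q_i}$, since $\comp\bigl(p^*(m_i^*(D))\bigr) = D(\Phi_{m_i}(p))$; and as $(p,q)$ ranges over $\BP_{\!Q_i} = \tot_{Q_i}$, the pair $(\Phi_{m_i}(p),\Phi_{m_i}(q))$ ranges exactly over $r_i = m_{i\,*}(\tot_{Q_i})$.

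The key remark is then that the set of pairs of morphisms of $\free Q$ identified by $D$ is itself a category relation on $\free Q$ — an equivalence relation compatible with composition, because $D$ is a functor — so $D$ identifies every pair of every $r_i$ if and only if it identifies every pair of the smallest path relation $(r_i)_{i\in[k]}$, which by the universal property of the quotient category means precisely that $D$ factors through $\pi\colon\free Q \to \Cat_0$, where $\Cat_0 \coloneqq \free Q/(r_i)_{i\in[k]}$. Hence the validity of $\Commerge_{m_0,\dots,m_{k-1}}$ is equivalent to: every functor out of $\free Q$ that factors through $\pi$ also factors through $\free Q \to \free Q/\tot_Q$. For the forward direction I would instantiate this at the tautological diagram $D = \pi$ (here $\Cat_0$ is a small category because $Q$ is finite, and $\pi$ is a bona fide diagram over $Q$, also when $Q$ is acyclic): $\pi$ factors through $\pi$, hence must factor through $\free Q \to \free Q/\tot_Q$, \ie\ $\pi$ identifies all parallel morphisms of $\free Q$, which is exactly the statement that $\Cat_0$ has at most one morphism between any two objects, \ie\ is thin. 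For the converse, if $\Cat_0$ is thin and $D = \bar D\circ\pi$ factors through $\pi$, then for parallel $p,q$ in $\free Q$ the morphisms $\pi(p),\pi(q)$ are parallel in the thin category $\Cat_0$, hence equal, hence $D(p) = D(q)$; so $D$ is commutative.

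The cyclic and acyclic cases run identically, the only difference being the admissible class of sorts, and \cref{thm:category_theory} covers both. I do not expect a genuine obstacle: the one place that requires care is the dictionary of the second paragraph — checking that commutativity of each pullback $m_i^*(D)$ is equivalent to $D$ collapsing the generating relation $r_i$ — together with the observation that the identification relation of a functor is automatically a category relation, which is exactly what lets us pass from the generators $r_i$ to the generated path relation $(r_i)_{i\in[k]}$, and hence to the quotient category $\Cat_0$.
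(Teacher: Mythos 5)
Your proof is correct and follows essentially the same route as the paper's: test validity on the canonical quotient diagram $\pi\colon\free Q\to\free Q/(r_i)_{i\in[k]}$ for one direction, and use the factorization of any premise-satisfying diagram through this quotient for the other. You merely spell out more explicitly (via the observation that the identification relation of a functor is a category relation) the factorization step that the paper dismisses as ``easy to check''.
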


\begin{proof}
  Set $\Cat \coloneqq \free{Q}/(r_i)_{i \in [k]}$. It is a model of $\T_\cat$, \resp $\cyc\T_\cat$. Moreover, the canonical diagram $D\colon \free{Q} \to \Cat$ verifies the premise of $\Commerge_{m_0, \dots, m_{k-1}}$. If $\Cat$ is not thin, then there ase two paths $p$ and $q$ in $\free{Q}$ with the same extremities which are not in relation. Then $\comp(p^*(D))$ is the class of $p$ in the quotient, which is different of the class of $q$, that is of $\comp(q^*(D))$. Hence $D$ is not commutative.

  For the other direction, by \cref{thm:category_theory}, it suffices to study diagrams in small categories. It is easy to check that any diagram $D'$ over $Q$ in a category $\Cat'$ which verifies the condition of $\Commerge_{m_0, \dots, m_{k-1}}$ factors through $D$, \ie, $D' = \Psi \circ D$ for some functor $\Psi\colon \Cat \to \Cat'$. If $\Cat$ is thin, then for any two paths $p$ and $q$ with same extremities in $Q$,
  \[ \comp(p^*(D')) = \Psi(\comp(p^*(D))) = \Psi(\comp(q^*(D))) = \comp(q^*(D')). \]
  Hence $\Commerge_{m_0, \dots, m_{k-1}}$ is valid.
\end{proof}

\begin{theorem}\label{thm:dec}
  The acyclic commerge problem for $\T_\cat$ is decidable for any tuple of embeddings.
\end{theorem}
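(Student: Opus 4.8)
By Lemma~\ref{lem:commerge_triviality}, the acyclic commerge problem for $\T_\cat$ with embeddings $m_0,\dots,m_{k-1}\colon Q_i\hookrightarrow Q$ reduces to deciding whether the quotient category $\free{Q}/(r_i)_{i\in[k]}$ is thin, where $r_i = m_{i\,*}(\tot_{Q_i})$. So the plan is to show that thinness of this quotient is decidable. The central observation is that $Q$ is finite and acyclic, so $\free{Q}$ has only finitely many morphisms: between any two vertices $u,v$ there are finitely many paths (each path is an embedding $\PQ_\ell\hookrightarrow Q$, and $\ell\le \card{V_Q}-1$). Hence $\free{Q}$ is a \emph{finite} category, and the congruence $(r_i)_{i\in[k]}$ — the smallest path relation (category relation on $\free Q$) containing the $r_i$ — can be computed explicitly.

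First I would make the congruence closure effective. Start from the relation $\bigcup_i r_i$ on the (finite) hom-sets of $\free{Q}$, then repeatedly close under: reflexivity, symmetry, transitivity (within each hom-set), and the composition-compatibility clause of Definition~\ref{def:catrel} ($f\sim g$, $f'\sim g'$ $\Rightarrow$ $f'\circ f\sim g'\circ g$). Because the total number of morphisms of $\free Q$ is finite, this saturation terminates after finitely many steps and produces the equivalence relations $(r_i)_{i\in[k]}$ on each hom-set. Then thinness of $\free{Q}/(r_i)_{i\in[k]}$ is simply the assertion that for every ordered pair of vertices $(u,v)$, all paths from $u$ to $v$ lie in a single equivalence class — a finite check. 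Combining, one gets a decision procedure: compute $\free Q$, saturate to get the congruence, test single-class-per-hom-set; answer ``valid'' iff the test passes.

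The only genuine subtlety — and the step I expect to require the most care — is confirming that the relation produced by this naive saturation really is the path relation $(r_i)_{i\in[k]}$ as defined in the paper, i.e.\ that closing under the \emph{bilinear} composition clause ``$f\sim g \wedge f'\sim g' \Rightarrow f'f\sim g'g$'' in tandem with the equivalence-relation axioms indeed yields the \emph{smallest} category relation containing the seeds. This is the usual congruence-closure argument: one checks that the closure operator is monotone and idempotent, that its fixed point is a category relation (equivalence on each hom-set, compatible with $\circ$), and that any category relation containing the $r_i$ is closed under these operations, hence contains the fixed point. Since $\free Q$ is finite, no transfinite iteration is needed and the fixed point is reached after a bounded number of rounds. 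Everything else — finiteness of hom-sets via acyclicity, enumeration of paths, the final per-hom-set emptiness-of-second-class check — is routine. Note the hypothesis that the $m_i$ are embeddings is used only to land in the framework of Lemma~\ref{lem:commerge_triviality}; the decidability argument itself would work verbatim for arbitrary morphisms into an acyclic $Q$.
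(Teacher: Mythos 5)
Your proposal is correct and follows the same route as the paper: reduce via Lemma~\ref{lem:commerge_triviality} to deciding thinness of $\free{Q}/(r_i)_{i\in[k]}$, observe that acyclicity of the finite quiver $Q$ makes $\free{Q}$ a finite category, and compute the generated path relation by finite saturation before checking that each hom-set collapses to a single class. The paper's proof is just a terser statement of exactly this argument, so your added detail on the congruence closure is a faithful elaboration rather than a different approach.
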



\begin{proof}
  By \cref{lem:commerge_triviality}, it suffices to decide if $\free{Q}/(r_i)_{i \in [k]}$. Since $\free{Q}$ and the $\free{Q_i}$, $i\in[k]$ are finite, we can compute relation $(m_{i\,*}(\tot_{Q_i}) \mid i \in [k])$ and check it is complete.
\end{proof}

\begin{proposition} \label{prop:undecidable_commerge}
  There exists a tuple of morphisms for which the cyclic commerge problem for $\cyc\T_\cat$ is undecidable.
\end{proposition}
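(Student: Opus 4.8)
The plan is to reduce a classical undecidable problem — the triviality problem for finitely presented groups (the Adian--Rabin theorem) — to the cyclic commerge problem, using \cref{lem:commerge_triviality} as the bridge: that lemma says $\Commerge_{m_0,\dots,m_{k-1}}$ is valid among models of $\cyc\T_\cat$ if and only if the finitely presented category $\free{Q}/(r_i)_{i\in[k]}$ is thin, where $r_i = m_{i\,*}(\tot_{Q_i})$. So it suffices to produce, computably from an arbitrary finite group presentation $G=\langle a_1,\dots,a_n\mid w_1,\dots,w_m\rangle$ (each relator $w_k$ a word over $\{a_1^{\pm1},\dots,a_n^{\pm1}\}$), a quiver $Q$ and a tuple of quiver morphisms $m_i\colon Q_i\to Q$ for which $\free{Q}/(r_i)_i$ is exactly $G$ viewed as a one-object category; then thinness of the quotient is equivalent to triviality of $G$.

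First I would take $Q$ to be the bouquet with one vertex and $2n$ loops labelled $a_1,a_1^{-1},\dots,a_n,a_n^{-1}$, so that $\free{Q}$ is the one-object category whose endomorphism monoid is the free monoid $F$ on these $2n$ symbols. For a nonempty word $u=b_1\cdots b_\ell$ over this alphabet, I would use the directed cycle $C_u$ of length $\ell$ (vertices $z_0,\dots,z_{\ell-1}$, arrows $z_j\to z_{j+1\bmod\ell}$) together with the folding morphism $\mu_u\colon C_u\to Q$ collapsing all vertices and sending the $j$-th arrow to the loop $b_{j+1}$. The tuple of the statement is then formed by the $n$ morphisms $\mu_{a_ia_i^{-1}}$ (length-$2$ cycles) together with the $m$ morphisms $\mu_{w_k}$; note that these morphisms are genuinely not embeddings, in contrast with \cref{thm:dec}.

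The technical core is the computation of the congruence on $F$ generated by $\mu_{u\,*}(\tot_{C_u})$. Using that the hom-set from $z_i$ to $z_j$ in $\free{C_u}$ has exactly one path per nonnegative length congruent to $j-i$ modulo $\ell$, that $\mu_u$ sends the length-$a$ path out of $z_i$ to the length-$a$ factor of $u^\infty$ starting at position $i$, and that $\tot_{C_u}=\BP_{\!C_u}$ identifies all such parallel paths, a short argument shows that this congruence is precisely $\langle u^{(0)}=1,\dots,u^{(\ell-1)}=1\rangle$, where $u^{(j)}$ is the $j$-th cyclic rotation of $u$ (the identifications of higher powers that also arise being consequences). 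Hence $\free{Q}/(r_i)_i$ is the monoid presented by $a_ia_i^{-1}=a_i^{-1}a_i=1$ together with $w_k^{(j)}=1$ for all $k,j$. Since the first block of relations makes every generator, and hence every word, two-sided invertible, each rotation $w_k^{(j)}$ is a conjugate of $w_k$, so the relations with $j\neq0$ are redundant and the quotient is $G$ itself. By \cref{lem:commerge_triviality}, $\Commerge$ for this tuple holds among models of $\cyc\T_\cat$ iff $G$ is trivial; as the presentation-to-tuple map is computable, a decision procedure for the cyclic commerge problem would decide triviality of finitely presented groups, which is impossible.

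I expect the main obstacle to be exactly this rotation phenomenon. Because $\tot_{C_u}$ is the \emph{complete} path relation on the cycle, the folding morphism unavoidably imposes all cyclic rotations of a word rather than the word alone, so a naive encoding of arbitrary monoid presentations via ``bigon'' quivers fails — and one cannot sidestep this by using only relations with both sides nonempty, since such a monoid always surjects onto the two-element idempotent monoid and is therefore never trivial. Routing the reduction through group presentations, where rotations of a relator are conjugates and thus automatically implied once all generators are invertible, is precisely what repairs the construction; carefully verifying that redundancy, and that $\free{Q}/(r_i)_i$ is $G$ and not a proper quotient, is where the real work lies. The remaining ingredient, undecidability of triviality for finitely presented groups, is classical.
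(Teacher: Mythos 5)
Your proof is correct and \cref{lem:commerge_triviality} is indeed the intended bridge, but your reduction differs from the paper's in both the source problem and the encoding. The paper reduces from Markov's theorem on the triviality of finitely presented \emph{monoids}: $Q$ is a one-vertex quiver with a loop for each generator plus one auxiliary loop $e$; each relator $\rho$ is encoded by an \emph{acyclic} ``bigon'' $Q_\rho$ (a path-quiver of length $|\rho|$ glued along its two endpoints to a single parallel arrow), mapped to $Q$ by a non-embedding morphism that traces $\rho$ along the long side and sends the short arrow to $e$, while a separate morphism from the one-loop quiver forces $e$ to be the identity. Because $Q_\rho$ is acyclic, its complete path relation contains exactly one nontrivial pair of parallel paths, so the pushforward imposes precisely $\rho \eqd e$ and nothing else, and arbitrary monoid presentations can be encoded directly. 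Your folding-of-cycles encoding cannot do this: as you correctly diagnose, the complete path relation on a directed cycle forces all cyclic rotations of the word, and your detour through group presentations (Adian--Rabin), where rotations are conjugates and hence redundant once the relations $a_ia_i^{-1}=a_i^{-1}a_i=1$ make every generator invertible, is a valid repair; your computation of the congruence generated by $\mu_{u\,*}(\tot_{C_u})$ is also right. The trade-off: the paper's bigon trick sidesteps the rotation phenomenon entirely at the cost of one auxiliary loop (whose own pushed-forward relation, identifying all powers of $e$, is harmless since $e$ is meant to be $\id$ anyway), whereas your version needs the extra redundancy argument but requires no auxiliary generator and rests on the equally classical group-theoretic undecidability result.
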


\begin{proof}
  We proceed by reduction to an undecidability result, due to Markov~\cite{zbMATH03065841}. For $B$ an arbitrary finite set and $\langle B \rangle$ the associated free monoid, let $M$ be the finitely presentable monoid $\langle B \rangle/R$, for $R$ a finite subset of $\langle B \rangle$. The triviality of $M$ is undecidable.

  { \renewcommand{\linebox}{\path[use as bounding box] (-.05,-1ex)--(.55,-1ex)}
  Let $B$, $R$ and $M$ as above. Let $Q = (\{v\}, B \sqcup \{e\}, s_Q, t_Q)$ be a quiver with one vertex and loops labeled by elements of $B$ plus one loop $e$. To each element $\rho \in R$ corresponds a path $p_\rho\colon \PQ_{k_\rho} \to Q$, for some $k_\rho \in \N$. Let $Q_\rho$ be a pushout of the morphisms $\std\colon \Qtwodots[] \hookrightarrow \PQ_{k_\rho}$ and $\twodotsimap[]$. Let $m_\rho$ be the extension to $Q_\rho$ of $p_\rho$ obtained by mapping the new arrow onto $e$. Also set $m_e\colon \Qloop \to Q$ which maps the loop on $e$. We claim that the cyclic commerge problem for the $m_\rho$ and $m_e$ is undecidable. Indeed, $\free{Q}/\bigl(m_e^*(\tot_{\Qloop}), (m_\rho^*(\tot_{Q_\rho}))_{\rho\in R}\bigr)$ is the category associated to the monoid $M = \langle B \rangle/R$. Hence this category is thin if and only if the monoid is trivial. Markov Theorem~\cite{zbMATH03065841} and \cref{lem:commerge_triviality} conclude the proof. }
\end{proof}

We strengthen the previous proposition to the case of embeddings.

\begin{theorem} \label{thm:undecidable_commerge_embeddings}
  There exists a tuple of embeddings for which the cyclic commerge problem for $\cyc\T_\cat$ is undecidable.
\end{theorem}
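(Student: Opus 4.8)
The plan is to upgrade the construction in the proof of \cref{prop:undecidable_commerge} so that every $m_\rho$ and $m_e$ becomes an \emph{embedding}, while preserving the key identity $\free{Q}/\bigl(m_e{}^*(\tot_{\Qloop}), (m_\rho{}^*(\tot_{Q_\rho}))_{\rho\in R}\bigr) \cong \free B / R$, after which \cref{lem:commerge_triviality} and Markov's theorem finish the argument exactly as before. The obstruction in the previous proof is that a word $\rho \in \free B$ traversing the same generator twice yields a path $p_\rho\colon \PQ_{k_\rho} \to Q$ that is not an embedding into a quiver with a single vertex $v$. The standard fix is to ``unfold'' each relator into a fresh path of distinct vertices: for each $\rho = b_1 \cdots b_{k_\rho}$, introduce a path-quiver $\PQ_{k_\rho}$ with its own $k_\rho+1$ vertices and $k_\rho$ arrows, glue its endpoints together by taking the pushout $Q_\rho$ of $\std\colon \Qtwodots[] \hookrightarrow \PQ_{k_\rho}$ with $\twodotsimap[]$ (so $Q_\rho$ has a distinguished ``long'' arrow $e_\rho$ witnessing the composite), and then let $Q$ be obtained by gluing all these $Q_\rho$, together with $B$-labelled loops and an $e$-loop, at a single vertex $v$ via the colimit of the endpoint maps. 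The maps $m_\rho\colon Q_\rho \to Q$ are then embeddings provided $Q$ is built to contain a literal copy of each $Q_\rho$; this is arranged by \emph{not} identifying the internal arrows of different relators, only their two endpoints with $v$.

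Concretely, I would take $Q$ to have vertex set $\{v\}$ together with, for each $\rho\in R$, the internal vertices of a copy of $\PQ_{k_\rho}$ (there are $k_\rho - 1$ of them when $k_\rho \geq 1$); its arrows are the $B$-loops at $v$, the $e$-loop at $v$, and for each $\rho$ the $k_\rho$ arrows of the copy of $\PQ_{k_\rho}$ plus the long arrow $e_\rho$ of $Q_\rho$. The embedding $m_\rho\colon Q_\rho \hookrightarrow Q$ sends the copy of $\PQ_{k_\rho}$ identically, the two glued endpoints to $v$, and $e_\rho$ to $e_\rho$; it is injective on vertices and arrows by construction. For $m_e$ I would now use the embedding $\Qloop \hookrightarrow Q$ picking out the $e$-loop — which is already an embedding, so no change is needed there. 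It remains to relate $\free Q / \bigl((r_e), (r_\rho)_\rho\bigr)$ to $\free B/R$: the relation $r_\rho = m_{\rho\,*}(\tot_{Q_\rho})$ forces the composite of the $i$-th internal path to equal the long arrow $e_\rho$, and identifying the $B$-loops with the corresponding loops still present — one must be careful that the internal arrows of the $\PQ_{k_\rho}$ copies should be \emph{labelled by elements of $B$} as well, i.e.\ the $i$-th internal arrow of $\rho$'s copy carries the generator $b_i$, but as a distinct arrow of $Q$. Then I need one more batch of embeddings, one per internal arrow, identifying it in the quotient with the corresponding $B$-loop at $v$: concretely an embedding $\Qmap[] \hookrightarrow Q$ has image a single arrow, so to force ``internal arrow $= $ loop $b_i$'' I instead glue via a two-arrow quiver $\Qcobimap[]$ mapped to the pair (internal arrow, loop $b_i$), whose complete path relation equates the two; this is an embedding since the two arrows are distinct in $Q$.

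Assembling: $Q$ carries the $B$-loops $\{b : b\in B\}$, the $e$-loop, and for each $\rho = b_1\cdots b_{k_\rho}$ a copy $C_\rho$ of $\PQ_{k_\rho}$ with arrows $c_{\rho,1},\dots,c_{\rho,k_\rho}$ and a long arrow $e_\rho$ spanning its endpoints; the tuple of embeddings is $\bigl(m_e\colon \Qloop\hookrightarrow Q\text{ onto }e\bigr)$, $\bigl(m_\rho\colon Q_\rho\hookrightarrow Q\bigr)$ for $\rho\in R$, and $\bigl(n_{\rho,i}\colon \Qcobimap[]\hookrightarrow Q\text{ onto }\{c_{\rho,i}, b_i\}\bigr)$ for all $\rho,i$. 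In the quotient $\free Q/(\text{all these})$: $n_{\rho,i}$ collapses $c_{\rho,i}$ to $b_i$, so $C_\rho$'s composite becomes $b_1\circ\cdots\circ b_{k_\rho}$ read off $\rho$; $m_\rho$ then equates this with $e_\rho$; and finally $m_e$ makes $e$ (hence each $e_\rho$, which the $m_\rho$-relation also ties to $e$ via its own arrow — here I should map the long arrow of $Q_\rho$ onto $e$ rather than a fresh $e_\rho$, eliminating the need for distinct long arrows) equal to the identity. The upshot is that the quotient category has one object, morphisms generated by $B$, subject to exactly the relations $\rho = e = \id$ for $\rho\in R$, i.e.\ it is the monoid $\free B/R$. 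By \cref{lem:commerge_triviality} the corresponding cyclic commerge problem for these embeddings is valid iff $\free B/R$ is trivial, which is undecidable by Markov's theorem~\cite{zbMATH03065841}. The main obstacle is purely bookkeeping: verifying that each listed map is genuinely an embedding (injectivity on both vertices and arrows, which forces the ``fresh internal arrows'' device) and that after quotienting the internal arrows/long arrows exactly reproduce $\free B/R$ with no extra or missing relations; I would carry this out by checking that the smallest path relation generated is precisely the congruence on $\free Q$ whose quotient is $\free B/R$, using that $\tot_{Q_\rho}$ and $\tot_{\Qcobimap[]}$ are the complete relations on their sources so that $m_{*}$ of them imposes exactly the intended equalities and nothing more.
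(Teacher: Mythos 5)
Your high-level strategy --- replace the non-embedding morphisms of \cref{prop:undecidable_commerge} by genuine embeddings and then reuse \cref{lem:commerge_triviality} together with Markov's theorem --- is the right one, and is also the paper's. But the concrete construction breaks at the key step, and the problem is structural, not ``bookkeeping''. First, the map $m_\rho\colon Q_\rho \to Q$ you describe is not an embedding: $Q_\rho$ has $k_\rho+1$ distinct vertices, and gluing both endpoints of its underlying path-quiver to the single vertex $v$ of $Q$ forces $(m_\rho)_V$ to identify vertex $0$ with vertex $k_\rho$, so it is not injective on vertices. Second, and more seriously, the maps $n_{\rho,i}\colon \Qcobimap[] \to Q$ sending one of the two parallel arrows to the internal arrow $c_{\rho,i}$ and the other to the loop $b_i$ are not even well-defined quiver morphisms: both arrows of $\Qcobimap[]$ have the same source vertex and the same target vertex, so their images must as well, whereas $c_{\rho,i}$ joins two distinct internal vertices and $b_i$ is a loop at $v$. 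There is no local repair: a relation of the form $m_*(\tot_{Q'})$ can only relate paths with the same extremities, so no such relation can ever identify an arrow between distinct vertices with a loop. And if you instead try to equate the two \emph{composite} paths (both from $v$ to $v$) by taking the spanning subquiver containing both the copy $C_\rho$ and the loops $b_1,\dots,b_{k_\rho}$, the complete relation $\tot$ on that subquiver also relates each loop $b_i$ to the empty path at $v$ (they have the same extremities), collapsing every generator occurring in a relator to the identity.

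The paper escapes exactly this trap by abandoning the one-vertex picture altogether: it replaces $Q$ by a quiver $\cyc Q^k$ on $k$ ordered vertices in which each generator $b$ appears as arrows $b_{i,j}$ (for $i<j$) between \emph{distinct} vertices, adds connector arrows $e_{i,j}$, and shows that the quotient by the relations coming from the spanning subquivers $A_e$ and $A_{b,i,j}$ is \emph{equivalent} (not isomorphic) as a category to $\free{Q}$; the relators are then imposed by subsets $A_\rho$ whose complete relations equate two genuinely parallel paths from $v_0$ to $v_l$, neither of which is a loop. To salvage your unfolding idea you would need a comparable device for moving each generator off the loop at $v$; as written, the proof does not go through.
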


\begin{proof}
  Let $B$, $R$, $M$ as in the proof of \cref{prop:undecidable_commerge}. Let $Q$ be the quiver $(\{v\}, B, s_Q, t_Q)$. If $k \geq 2$, we define the quiver $\cyc Q^k$ as
  \begin{gather*}
    \cyc Q^k \coloneqq \Big(\{ v_i \mid i \in [k] \}, \{ b_{i,j} \mid b \in B, 0 \leq i < j < k \} \sqcup \{ e_{i,j} \mid i,j \in [k] \}, s_{\cyc Q^k}, t_{\cyc Q^k} \Big) \text{ where} \\
    s_{\cyc Q^k}(b_{i,j}) = v_i, \quad s_{\cyc Q^k}(e_{i,j}) = v_i, \quad t_{\cyc Q^k}(b_{i,j}) = v_j, \quad t_{\cyc Q^k}(e_{i,j}) = v_j.
  \end{gather*}

  We have a projection $\pi: \free{\cyc Q^k} \to \free{Q}$, which maps $b_{i,j}$ on $b$ and $e_{i,j}$ on $\id_v$, and a section $\iota: \free{Q} \to \free{\cyc Q^k}$ defined by mapping $v$ onto $v_0$ and $b$ onto $b_{0,k-1} \circ e_{k-1,0}$, where, as usual, we denote in a same way an arrow and the corresponding path of length one.

  For $A$ any subset of $A_{\cyc Q^k}$, let $m_A\colon \cyc Q^k|_A \hookrightarrow \cyc Q^k$ be the canonical embedding, and let $r_A \coloneqq m_{A\,*}(\tot_{\cyc Q^k|_A})$. Set $r' \coloneqq ((r_A)_{A\in\A})$ for $\A \subset 2^{A_{\cyc Q^k}}$ defined as the set containing
  \begin{itemize}
    \item $A_e \coloneqq \{e_{i,j} \mid i,j\in[k]\}$,
    \item for $i<j$ and $b\in B$,
    \[ A_{b,i,j} \coloneqq \{\underbrace{e_{0,i}}_{\text{if $i\neq0$}}, \ b_{i,j} \ , \underbrace{e_{j,k-1}}_\text{if $j\neq k-1$}, \ b_{0,k-1}\}. \]
  \end{itemize}
  We claim that $\pi$ and $\iota$ induce an equivalence of category between $\free{Q}$ and $\free{\cyc Q^k}/r'$. From the definition of $A_e$, for any $i,j,l \in [k]$, we have $e_{i,j} \circ e_{j,l} \rel{r'} e_{i,l}$ and $e_{i,i} \rel{r'} \id_i$. Now the definition of $A_{b,i,j}$, for $b\in B$ and $0 \leq i < j < k$, induces that $e_{0,i} \circ b_{i,j} \circ e_{j,k-1} \rel{r'} b_{0,k-1}$. These relations generate all $r'$, and they become equalities by applying the projection. Hence $\pi_*\colon \free{\cyc Q^k}/r' \to \free{Q}$ is well-defined. Clearly $\pi\circ\iota$ is identity. Concerning the other direction, for $b\in B$ and $0\leq i < j < k$, we have
  \[ \iota\circ\pi(b_{i,j}) = b_{0,k-1} \circ e_{k-1,0} \rel{r'} e_{0,i} \circ b_{i,j} \circ e_{j,k-1} \circ e_{k-1,0} \rel{r'} e_{0,i} \circ b_{i,j} \circ e_{j,0}. \]
  Hence we get a natural transformation $\eta$ between the identity functor and $\iota \circ \pi_*$ by setting $\eta_i \coloneqq e_{i,0} \in \Hom_{\free{\cyc Q^k}/r'}(i, \iota\circ\pi(i) = 0)$. Since $e_{i,0}$ is an isomorphism, we conclude that there is an equivalence of category between $\free{Q}$ and $\free{\cyc Q^k}/r'$.

  \smallskip
  Recall that $M$ is the monoid $\langle B \rangle/R$. Assume that $k$ is greater than the longest word in $R$. To any word $\rho = b^1b^2\dots b^l$ in $R$ corresponds a subset
  \[ A_\rho \coloneqq \{e_{0,l}, b^1_{0,1}, b^2_{1,2}, \dots, b^l_{l-1,l}\} \subseteq A_{\cyc Q^k}. \]

  Let $\A' \coloneqq \A \cup \{A_\rho \mid \rho \in R\}$. We claim that $\Commerge_{(m_A)_{A\in\A'}}$ is undecidable. Indeed, $\cyc Q^k/((r_A)_{A\in\A'})$ is equivalent as a category to $Q/(\pi_*(r_{A_\rho}))_{\rho \in R}$ which is the category of the monoid $M$. Once again, we conclude using \cref{lem:commerge_triviality} and Markov Theorem.
\end{proof}

\begin{theorem}
  The theory $\T_\cat$ contains undecidable formulas.
\end{theorem}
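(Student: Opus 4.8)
The plan is to prove that the set of $\Sigma$-sentences holding in all models of $\T_\cat$ is not recursive; since $\T_\cat$ is recursively axiomatised, this in particular shows $\T_\cat$ is incomplete, so it has formulas that are neither provable nor refutable. By \cref{thm:category_theory} a $\Sigma$-sentence is valid in every model of $\T_\cat$ exactly when it holds in the categorical interpretation of every small category, so it suffices to produce a computable map $(B,R)\mapsto\phi_{B,R}$ from finite monoid presentations to $\Sigma$-sentences with $\phi_{B,R}$ holding in all small categories if and only if the monoid $M\coloneqq\langle B\rangle/R$ is trivial; undecidability then follows from Markov's theorem~\cite{zbMATH03065841}, already used for \cref{prop:undecidable_commerge}. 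As there, we reduce from triviality of finitely presented monoids; the new ingredient is that this time no cyclic sort may appear.

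The construction of $\phi_{B,R}$ (we may assume $B\neq\emptyset$) rests on one observation: a family of endomorphisms of a single object of a category can be encoded by diagrams over the \emph{acyclic} quiver $\Qmap[]$, the ``loops'' of a monoid being recovered via existential quantification and iterated composition rather than via a cyclic sort. Thus $\phi_{B,R}$ universally quantifies over variables $(x_b)_{b\in B}$ of sort $\Qmap[]$ and asserts: \emph{if} all the objects $\restr_{\dotimap}(x_b)$ and $\restr_{\dotiomap}(x_b)$ coincide --- so every $x_b$ is an endomorphism of one common object $A\coloneqq\restr_{\dotimap}(x_{b_0})$ --- \emph{and} for every word $\rho=c_1\cdots c_\ell\in R$ there exists $z$ realising the composite of $x_{c_1},\dots,x_{c_\ell}$ (obtained by iterating $\Comp$; for $\ell=1$ this is $z\eqd x_{c_1}$, and for $\ell=0$ it is $\Id(A,z)$) together with $\Id(A,z)$, \emph{then} $\Id(A,x_b)$ for every $b\in B$. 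Because $B$ and $R$ are finite, $\phi_{B,R}$ is a genuine first-order sentence mentioning only acyclic sorts ($\Qdot[]$, $\Qmap[]$, and the small quivers occurring inside $\Comp$ and $\Id$), and it is plainly computable from $(B,R)$. The precise orientation in which one iterates $\Comp$ only decides whether the reduction is from triviality of $M$ or of its opposite $M^{\dual}$ --- equivalent undecidable problems --- so this convention is harmless.

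It then remains to check $\T_\cat\models\phi_{B,R}\iff M$ trivial. For ``$\Leftarrow$'': in any small category, a valuation of $(x_b)$ satisfying the premise yields endomorphisms $f_b$ of a common object $A$, and the induced monoid morphism $\langle B\rangle\to\Hom(A,A)$, $b\mapsto f_b$, sends each $\rho\in R$ to $\id_A$, hence factors through $M$; if $M$ is trivial then every $f_b=\id_A$, so the conclusion holds and $\phi_{B,R}$ holds in every categorical interpretation, hence in every model of $\T_\cat$. For ``$\Rightarrow$'': if $M$ is nontrivial then, as $M$ is generated by the classes of the $b\in B$, some class $b^\star$ is not the unit; let $\Cat$ be the one-object category associated to $M$ and interpret each $x_b$ as the $\Qmap[]$-diagram whose single morphism is the class of $b$. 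The premise of $\phi_{B,R}$ holds (each $\rho\in R$ is the unit in $M$) while its conclusion fails at $b^\star$, so this model of $\T_\cat$ refutes $\phi_{B,R}$; hence $\T_\cat\not\models\phi_{B,R}$. Combining the two directions with \cref{thm:category_theory} gives the reduction, and the theorem.

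The one point needing care is the faithfulness of this translation: one must confirm that in every categorical interpretation the auxiliary formulas $\Comp$ and $\Id$, together with the equalities between $\restr$-terms used to glue objects, carry exactly their intended meaning, so that satisfying the premise of $\phi_{B,R}$ is precisely the datum of endomorphisms of one object subject to $R$. This is essentially already contained in the proof of \cref{thm:category_theory}, so I expect the bookkeeping --- not any conceptual gap --- to be the only real obstacle. It is instructive to contrast this with \cref{thm:dec}: forbidding cyclic sorts does make the restricted $\Commerge$ problem decidable, yet the full first-order language over $\Sigma$ nonetheless remains expressive enough to carry an undecidable word problem.
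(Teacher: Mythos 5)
Your proof is correct, but it takes a genuinely different route from the paper's. The paper stays within the ``commerge'' framework of \cref{sec:dec}: it unrolls the one-vertex quiver with $|B|$ loops into an acyclic layered quiver $Q^k$ (arrows $b_{i,j}$ and $e_{i,j}$ only going from lower to higher index), quantifies over a \emph{single} variable of sort $Q^k$, forces the $e_{i,j}$ to be identities via $\Id$, imposes the relations as $\com$ premises on restrictions to subquivers $A\in\A''$, and concludes $\com(x)$; correctness is obtained by exhibiting a commutativity-preserving bijection, via the pullback $\pi^*$, between such diagrams on $Q^k$ and the diagrams on the cyclic quiver $Q$ from \cref{prop:undecidable_commerge}. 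You instead quantify over $|B|$ variables of the acyclic sort $\Qmap[]$, glue them onto one object by equalities of their $\Qdot[]$-restrictions, and encode the relators by iterated $\Comp$ plus $\Id$ --- exploiting the key observation that an endomorphism is perfectly representable as a diagram over an acyclic sort, so no cyclic quiver is ever needed. Both arguments reduce from Markov's theorem as stated in the paper. Your version is more elementary (no layered quiver, no diagram-transport argument) at the price of formulas with nested existentials for the word compositions and of a careful check that $\Comp$ and $\Id$ carry their intended semantics in categorical interpretations --- which, as you note, is exactly what the proof of \cref{thm:category_theory} supplies. The paper's version buys a formula of commerge shape, which makes the contrast with the decidability result of \cref{thm:dec} more pointed. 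Your opening reduction of the theorem's claim to non-recursiveness of the consequence set (and the ensuing incompleteness remark) is a sound extra, not needed for what the paper actually proves.
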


\begin{proof}
  Let $M$, $B$, $R$ and $Q$ as in the proof of \cref{thm:undecidable_commerge_embeddings}. Let $k$ be the size of the longest word in $R$ plus one. We define the quiver $Q^k$ as
  \[ Q^k \coloneqq \Big(\{v_i \mid i \in [k] \}, \{ b_{i,j} \mid b \in B, 0 \leq i < j < k \} \sqcup \{ e_{i,j} \mid 0\leq i < j < k \}, s_{Q^k}, t_{Q^k} \Big) \]
  where
  \[ s_{Q^k}(b_{i,j}) = v_i, \quad s_{Q^k}(e_{i,j}) = v_i, \quad t_{Q^k}(b_{i,j}) = v_j, \quad t_{Q^k}(e_{i,j}) = v_j. \]
  Note that $Q^k$ is acyclic. Let
  \[ \A'' \coloneqq \{ A_{b,i,j} \mid b \in B, 0 \leq i < j < k \} \ \cup \ \{ A_\rho \mid \rho \in R \}, \]
  where $A_{b,i,j}$ and $A_\rho$ are defined as in the proof of \cref{thm:undecidable_commerge_embeddings}. For $A \in \A''$, let $m_A\colon Q^k|_A \hookrightarrow Q^k$ be the corresponding embedding.
  Consider the formula
  \begin{align*}
    \CommergeWithId_{M}\colon \qquad& \forall_{Q^k} x, \quad \bigwedge_{0\leq i < j < k} \Id(\restr_{v_i}(x), \restr_{e_{i,j}}(x)) \\
    & \qquad\qquad \ \wedge \ \bigwedge_{A \in \A''} \com(\restr_A(x)) \quad \to \quad \com(x).
  \end{align*}
  The projection $\pi\colon Q^k \to Q$ induces a pullback $\pi^*$ between diagrams on $Q$ and that on $Q^k$. It is easy to check that $\pi^*$ induces a bijection between the diagrams on $Q$, verifying the condition of the commerge problem described in the proof of \cref{prop:undecidable_commerge}, and the diagrams on $Q^k$ verifying the condition of $\CommergeWithId_{M}$. Moreover the bijection preserves commutativity. Hence we conclude the proof as for \cref{prop:undecidable_commerge}.
\end{proof}

\section{Conclusion}\label{sec:concl}

We have shown that the many-sorted signature $\Sigma$ is expressive enough to formulate a theory $\T_\cat$ and its extension $\T_\ab$, whose models are exactly and respectively diagrams in small categories, and that in small abelian categories. Restricting sorts to acyclic quivers makes the commerge problem for $\T_\cat$ decidable. A companion file~\cite{file} to this submission illustrates how to implement a deep embedding of formulas of $\Sigma$ using the \Coq proof assistant~\cite{the_coq_development_team_2023_8161141}; its content should be easy to transpose to other proof systems. Theorem \code{duality_theorem_with_theory} shall bring a duality principle, \ie, that a formula of the language is valid if and only if its dual is valid, to any formalized definition of abelian categories. \cref{thm:dec} results in a complete decision procedure for commutativity clauses. The optimizations that make it work on concrete examples however go beyond the scope of the present article.

Similar concerns have motivated the implementation of the accomplished \Globular proof assistant~\cite{DBLP:journals/lmcs/BarKV18}, for higher-dimensional category theory. The closest related work we are aware of yet seem unpublished at the time of writing. Lafont's categorical diagram editor~\cite{lafont}, based on the Unimath library~\cite{DBLP:conf/cpp/Voevodsky11} and Barras and Chabassier's graphical interface for diagrammatic proofs~\cite{barras-chabassier} both provide a graphical interface for generating \Coq proof scripts and visualizing \Coq goals as diagrams. No specific automation is however provided. Himmel~\cite{himmel} describes a formalization of abelian categories in \Lean~\cite{DBLP:conf/cade/MouraKADR15}, including proofs of the five lemma and of the snake lemma, and proof (semi-)automation tied to this specific formalization. Duality arguments are not addressed. Monbru~\cite{monbru} also discusses automation issues in diagram chases, and provides heuristics for generating them automatically, albeit expressed in a pseudo-language.
\looseness=-1

\bibliography{biblio}
\end{document}